\newtheorem{theorem}{Theorem}
\newtheorem{lemma}[theorem]{Lemma}
\newtheorem{conjecture}{Conjecture}
\newtheorem{claim}{Claim}[theorem]
\newtheorem*{claim*}{Claim}
\newtheorem*{definition*}{Definition}
\def\AAsixA{{\cal N}_6^\Delta}  %%{\cal \hat{A}}_6}
\def\AA{{\cal A}}  
\def\BB{{\cal B}}  
\def\CC{{\cal C}}  
\def\qed{\hfill\fbox{\hbox{}}\medskip}
\def\qedclaim{\hfill$\triangle$\smallskip}
\title{
Arrangements of Pseudocircles:\\ Triangles and Drawings\footnote{
Partially supported by DFG Grants FE 340/11-1 and FE 340/12-1.
Manfred Scheucher was partially supported by the ERC Advanced Research Grant no.~267165 (DISCONV).
The authors gratefully acknowledge the computing time granted 
by the Institute of Software Technology, Graz University of Technology.
}
}
\newcommand*\samethanks[1][\value{footnote}]{\footnotemark[#1]}
\author{Stefan Felsner\thanks{Institut f\"ur Mathematik, Technische Universit\"at Berlin, Germany, 
\texttt{\{felsner,scheucher\}@math.tu-berlin.de}}
\and Manfred Scheucher\samethanks}
\date{}
\begin{document}
\maketitle
\thispagestyle{plain}

\begin{abstract}
  A pseudocircle is a simple closed curve on the sphere or in the
  plane. The study of arrangements of pseudocircles was initiated by
  Gr\"unbaum, who defined them as collections of simple closed curves
  that pairwise intersect in exactly two crossings.  Gr\"unbaum
  conjectured that the number of triangular cells $p_3$ in digon-free
  arrangements of $n$ pairwise intersecting pseudocircles is at least
  $2n-4$. We present examples to disprove this conjecture. With a
  recursive construction based on an example with~$12$ pseudo\-circles
  and $16$ triangles we obtain a family of intersecting digon-free arrangements 
  with $p_3(\AA)/n \to 16/11 = 1.\overline{45}$.  
  We expect that the lower bound $p_3(\AA) \geq
  4n/3$ is tight for infinitely many simple arrangements.  It may
  however be true that all digon-free arrangements of~$n$ pairwise intersecting
  circles have at least $2n-4$ triangles.

  For pairwise intersecting arrangements with digons we have a lower bound of
  $p_3 \geq 2n/3$, and conjecture that $p_3 \geq n-1$.

  Concerning the maximum number of triangles in pairwise intersecting
  arrangements of pseudocircles, we show that $p_3 \le \frac{4}{3}\binom{n}{2} +O(n)$.  This is
  essentially best possible because there are families of pairwise
  intersecting arrangements of $n$ pseudocircles with $p_3 = \frac{4}{3}\binom{n}{2}$.

  The paper contains many drawings of arrangements of pseudocircles and 
  a good fraction of these drawings was produced automatically from the
  combinatorial data produced by our generation algorithm.  In the
  final section we describe some aspects of the drawing algorithm.
\end{abstract}

%%%%%%%%%%%%%%%%%%%%%%%%%%%%%%%%%%%%%%%%%%%%%%%%%%%%%%%%%%%%%%%%%%%%%%%%%%%%
\section{Introduction}

Arrangements of pseudocircles generalize arrangements of circles in the same
vein as arrangements of pseudolines generalize arrangements of lines. The
study of arrangements of pseudolines was initiated 1926 with an article of
Levi~\cite{l-dtdpe-26} where he proved the `Extension Lemma' and studied
triangles in arrangements. Since then arrangements of pseudolines were
intensively studied and the handbook article on the topic~\cite{fg-pa-16}
lists more than 100 references.

Gr\"unbaum~\cite{Gr72} initiated the study of arrangements of
pseudocircles. By stating a large number of conjectures he was
hoping to attract the attention of researchers for the topic. 
The success of this program was limited and several of Gr\"unbaum's 
45 year old conjectures remain unsettled. In this paper we
report on some progress regarding conjectures involving
numbers of triangles and digons in arrangements of
pseudocircles.

Some of our results and new conjectures are based on a
program written by the second author that enumerates all arrangements
of up to 7 pairwise intersecting pseudocircles. Before formally stating
our main results we introduce some terminology:
\goodbreak

An \emph{arrangement of pseudocircles} is a collection of closed
curves in the plane or on the sphere, called \emph{pseudocircles},
with the property that the intersection of any two of the
pseudocircles is either empty or consists of two points where the
curves cross.  An arrangement $\AA$ of pseudocircles is
 
\begin{description}
\item[\emph{simple},] if no three pseudocircles of $\AA$ intersect in a common point. 
\item[\emph{pairwise intersecting},]  if any two pseudocircles of $\AA$
  have non-empty intersection. We will frequently abbreviate and just write
  ``\emph{intersecting}'' instead of ``pairwise intersecting''.
\item[\emph{cylindrical},] if there are two cells of the
  arrangement which are separated by each of the pseudocircles.
\item[\emph{digon-free},] if there is no cell of the
  arrangement which is incident to only two pseudocircles.
\end{description}

We consider the sphere to be the most natural ambient space for
arrangements of pseudocircles. Consequently, we call two arrangements
isomorphic if they induce homeomorphic cell decompositions of the
sphere.  In many cases, in particular in all our figures, arrangements
of pseudocircles are embedded in the Euclidean plane, i.e., there is a
distinguished outer/unbounded cell. An advantage of such a
representation is that we can refer to the inner and outer side of a
pseudocircle.  Note that for every cylindrical arrangement of
pseudocircles it is possible to choose the unbounded cell such that
there is a point in the intersection of the inner discs of
all pseudocircles.

In an arrangement $\AA$ of pseudocircles, we denote a cell with $k$
crossings on its boundary as a \emph{$k$-cell} and let $p_k(\AA)$ be
the number of $k$-cells of~$\AA$.  Following Gr\"unbaum we call
2-cells \emph{digons} and remark that some other authors call them
\emph{lenses}. 3-cells are \emph{triangles}, 4-cells are
\emph{quadrangles}, and 5-cells are \emph{pentagons}.

In this paper we assume that arrangements of pseudocircles are simple unless
explicitly stated otherwise.

Conjecture~3.7 from Gr\"unbaum's monograph~\cite{Gr72} is:
{\em Every (not necessarily simple) digon-free arrangement of $n$
  pairwise intersecting pseudocircles has at least $2n-4$
  triangles}. Gr\"unbaum also provides examples of arrangements
of~$n\geq 6$ pseudocircles with $2n-4$ triangles. 

Snoeyink and Hersh\-berger~\cite{SH91} showed that the sweeping
technique, which serves as an important tool for the study of
arrangements of lines and pseudolines, can be adapted to 
work also in the case of arrangements of pseudocircles.
They used sweeps to show that, in an intersecting
arrangement of pseudocircles, every pseudocircle is incident to two
cells which are digons or triangles on either side. Therefore,
$2p_2 + 3p_3 \geq 4n$ which implies that every intersecting digon-free
arrangement of $n$ pseudocircles has at least $4n/3$ triangles.

Felsner and Kriegel~\cite{FK98} observed that the bound
from~\cite{SH91} also applies to non-simple intersecting digon-free
arrangements and gave examples of arrangements showing that the bound
is tight on this class for infinitely many values of~$n$.  These
examples disprove the conjecture in the non-simple case.

In Section~\ref{sec:few-tri}, we give counterexamples to Gr\"un\-baum's
conjecture \cite[Conjecture~3.7]{Gr72} which are simple.  
With a recursive construction based on an
example with~$12$ pseudo\-circles and $16$ triangles we obtain a family 
of digon-free intersecting arrangements with
$p_3/n \xrightarrow{n \to \infty} 16/11 = 1.\overline{45}$.  We then replace
Gr\"un\-baum's conjecture by Conjecture~\ref{conj:4/3_is_tight}: \emph{ The
  lower bound $p_3(\AA) \geq 4n/3$ is tight for infinitely many
  simple arrangements}.

A specific arrangement~$\AAsixA$ of~6 pseudocircles of $8$ triangles is
interesting in this context. The arrangement~$\AAsixA$ has no representation
with circles, two different proofs for the non-circularizablility of~$\AAsixA$
have been given in~\cite{fs-tap-17} and~\cite{FelsnerScheucher2019}.
The arrangement~$\AAsixA$ appears as a subarrangement in all known simple,
intersecting, digon-free arrangements with $p_3 < 2n-4$.  This motivates the
question, whether indeed Gr\"unbaum's conjecture is true when restricted to
intersecting arrangements of circles, see Conjecture~\ref{conj:weakGr}. In
Subsection~\ref{ssec:arrangement-with-digons} we discuss arrangements with
digons. We give an easy extension of the argument of Snoeyink and
Hersh\-berger~\cite{SH91} to show that these arrangements contain at least
$2n/3$ triangles. All intersecting arrangements known to us have at least
$n-1$ triangles and therefore our Conjecture~\ref{conj:general_lower} is that
$n-1$ is a tight lower bound for intersecting arrangements with digons.

In Section~\ref{sec:max-triangles} we study the maximum number of
triangles in arrangements of $n$ pseudocircles.  We show an upper
bound of order $2n^2/3+O(n)$. For the lower bound construction we glue two
arrangements of $n$ pseudolines into an arrangement of $n$
pseudocircles. 
Since respective arrangements of pseudolines are known,
we obtain arrangements of pseudocircles with $2n(n-1)/3$ triangles for
$n \equiv 0,4 \pmod 6$.

The paper contains many drawings of arrangements of pseudocircles and
a good fraction of these drawings was produced automatically from the
combinatorial data produced by the generation algorithm.
In Section~\ref{sec:tutte-draw} we describe some aspects of the
drawing algorithm which is based on iterative calls to
a Tutte embedding a.k.a.~spring embedding with adapting weights on the
edges.

%%%%%%%%%%%%%%%%%%%%%%%%%%%%%%%%%%%%%%%%%%%%%%%%%%%%%%%%%%%%%%%%%%%%%%%%%%%%
\section{Intersecting Arrangements with few Triangles}
\label{sec:few-tri}

The main result of this section is the following
theorem, which disproves Gr\"un\-{}baum's conjecture.

\begin{theorem}\label{thm:tri-main}
The minimum number of triangles in 
digon-free intersecting arrangements of $n$ pseudocircles is 
\begin{enumerate}[(i)]
\item \label{thm:tri-main:1}
$8$ for $3\leq n \leq 6$.
\item \label{thm:tri-main:2} 
$\lceil \frac{4}{3}n\rceil$ for $6 \leq n \leq 14$.
\item \label{thm:tri-main:3} 
$<\frac{16}{11}n$ for all $n =11k+1$ with $k\in \mathbb{N}$.
\end{enumerate}
\end{theorem}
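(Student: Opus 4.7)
The plan is to handle the three parts of the theorem separately, combining the known lower bound $p_3(\mathcal{A}) \geq \lceil 4n/3 \rceil$ of Snoeyink--Hersh\-berger (recalled in the introduction) with explicit extremal examples and, for small $n$, with the enumeration program for arrangements of up to $7$ pairwise intersecting pseudocircles mentioned in the introduction.

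For part~(\ref{thm:tri-main:1}), the case $n=3$ is immediate: any digon-free intersecting arrangement of three pseudocircles is combinatorially the arrangement of three great circles in general position, which has exactly eight triangular faces. For $n = 4, 5$ the claimed bound of $8$ strictly exceeds $\lceil 4n/3 \rceil$, so the Snoeyink--Hershberger bound does not suffice on its own; I would invoke the exhaustive enumeration to rule out digon-free intersecting arrangements with fewer than $8$ triangles. For $n = 6$ the lower bound coincides with $\lceil 4n/3 \rceil = 8$ and therefore already follows from Snoeyink--Hershberger. The matching upper bounds would be supplied by exhibiting, for each $n \in \{4,5,6\}$, one explicit digon-free intersecting arrangement with exactly eight triangles, typically obtained by inserting pseudocircles into the $n = 3$ arrangement in cells so that no new triangles are created.

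For part~(\ref{thm:tri-main:2}), the lower bound is exactly Snoeyink--Hershberger. The remaining work is to display, for every $6 \leq n \leq 14$, a simple digon-free intersecting arrangement of $n$ pseudocircles achieving $\lceil 4n/3\rceil$ triangles. For $n \leq 7$ such examples come directly from the enumeration; for $8 \leq n \leq 14$ they have to be produced by small ad hoc constructions (or by local modifications of the $\mathcal{A}_{12}$ arrangement that realises $16$ triangles) and checked to meet all conditions.

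For part~(\ref{thm:tri-main:3}) the plan is a recursive construction starting from the digon-free intersecting arrangement $\mathcal{A}_{12}$ of $12$ pseudocircles with $16$ triangles mentioned in the introduction. I would design an \emph{inflation operation} that, applied to a digon-free intersecting arrangement of $n = 11k + 1$ pseudocircles with $p_3 = 16k$, produces a digon-free intersecting arrangement of $n + 11 = 11(k+1) + 1$ pseudocircles with $p_3 = 16(k+1)$. Iterated from $\mathcal{A}_{12}$ ($k = 1$), this yields for every $k \in \mathbb{N}$ an arrangement on $n = 11k + 1$ pseudocircles with $p_3 = 16k = 16(n-1)/11 < 16n/11$, as claimed. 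The main obstacle I anticipate is the design of the inflation operation: the $11$ newly inserted pseudocircles must cross every existing pseudocircle in exactly two points, must create no digons either among themselves or with the pre-existing pseudocircles, and must contribute exactly $16$ triangles net (neither destroying nor creating triangles elsewhere). The delicate point is therefore to identify inside $\mathcal{A}_{12}$ a local region whose combinatorial type is preserved by the operation, so that the insertion is guaranteed to remain valid after arbitrarily many iterations; I expect this to rely on choosing a distinguished pseudocircle (or thin bundle of pseudocircles) of $\mathcal{A}_{12}$ that serves as a `gluing spine' along which the next copy can be attached.
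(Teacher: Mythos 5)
Your treatment of parts~(i) and~(ii) follows the paper's route: the lower bound of $\lceil 4n/3\rceil$ from Snoeyink--Hershberger, the complete enumeration for $n\le 7$ to get the stronger bound of $8$ when $n=4,5$, and explicit examples for the upper bounds (the paper finds the examples for $8\le n\le 14$ by a computer search that extends arrangements with small $p_3+2p_2$, justified by Lemma~\ref{lem:p3+2p2}, but the logical content is the same as your ``exhibit an example for each $n$''). Your Euler-formula observation for $n=3$ is also correct.

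For part~(iii), however, there is a genuine gap: the entire substance of the proof is the operation you defer. You correctly identify that one needs a step turning an arrangement with $11k+1$ pseudocircles and $16k$ triangles into one with $11(k+1)+1$ pseudocircles and $16(k+1)$ triangles, and you correctly list the constraints such a step must satisfy, but you then write that designing it is ``the main obstacle I anticipate'' --- which is to say you have not proved part~(iii). The paper's resolution is Lemma~\ref{lem:merge_A_plus_B}: rather than inserting $11$ new pseudocircles one by one, one exploits that $\mathcal{A}_{12}$ is cylindrical, witnessed by a simple curve $P_{\mathcal{A}}$ crossing each of its $12$ pseudocircles exactly once, traversing $\tau=1$ triangle of $\mathcal{A}_{12}$ and forming $\delta=2$ triangles with pairs of its pseudocircles. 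One then draws $\mathcal{A}_{12}$ as a narrow ``belt'' with all its crossings confined to a small belt-buckle, and substitutes this belt for a single pseudocircle $B$ of the current arrangement $\mathcal{C}_k$ that bounds a chosen triangle, placing the buckle on the edge of that triangle. A careful case analysis of the cells of the merged arrangement gives $p_3(\mathcal{C})=p_3(\mathcal{A})+p_3(\mathcal{B})+\delta-\tau-1$, which with $\delta=2$, $\tau=1$ yields exactly the $+16$ per step. This is close to your ``gluing spine'' intuition, but inverted: the spine is a pseudocircle of the large arrangement, and all of $\mathcal{A}_{12}$ is threaded along it; the quantities $\delta$ and $\tau$, which you do not mention, are precisely what makes the triangle bookkeeping come out right. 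Without this (or an equivalent) construction and its verification, part~(iii) remains unestablished.
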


Figures~\ref{fig:min_tri_3-4-5} and~\ref{fig:min_tri_6-7-8} show intersecting
arrangements with the minimum number of triangles for up to~$8$~pseudocircles.
We remark that, in total, there are three non-isomorphic intersecting
arrangements of $n=8$ pseudocircles with $p_3 = 11$ triangles, these are the
smallest counterexamples to Gr\"unbaum's conjecture
(cf. Lemma~\ref{lem:p3+2p2}).  We refer to our
website~\cite{scheucher_website} for further examples.

%%%%%%%%%%%%%%%%%%%%%%%%%%%%%%%%%%%%%%%%%%%%%%%%%%%%%%%%%%%%%%%%%%%%%%%%%%%%
\begin{figure}[htb]
  \centering
  \includegraphics[width=0.8\textwidth]{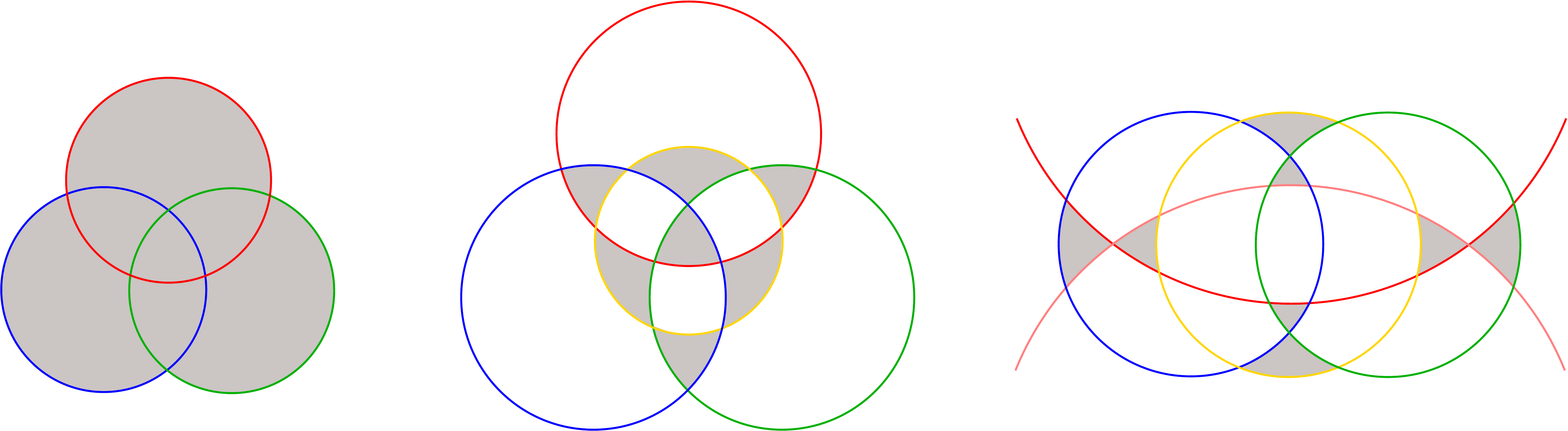}
  \caption{Digon-free intersecting arrangements of $n=3,4,5$ circles with $p_3=8$ triangles. Triangles (except the outer face) are colored gray.}
  \label{fig:min_tri_3-4-5}
\end{figure}
%%%%%%%%%%%%%%%%%%%%%%%%%%%%%%%%%%%%%%%%%%%%%%%%%%%%%%%%%%%%%%%%%%%%%%%%%%%%

The basis for Theorem~\ref{thm:tri-main} was laid by exhaustive computations,
which generated all intersecting arrangements of up to $n=7$ pseudocircles.
Starting with the unique intersecting arrangement of two pseudocircles, our
program recursively inserted pseudocircles in all possible ways.
From the complete enumeration, we know the minimum number of triangles for $n\leq
7$. In the range from 8 to 14, we had to iteratively use arrangements of $n$
pseudocircles with a small number of triangles and digons to generate
arrangements of $n+1$ pseudocircles with the same property.  Using this
strategy, we found intersecting arrangements with $\lceil 4n/3\rceil$
triangles for all $n$ in this range.  The corresponding lower bound
$p_3(\AA) \geq 4n/3$ is known from~\cite{SH91}.

The approach, which we had used to tackle arrangements of up to $n=14$
pseudocircles, made the complete enumerating of all arrangements obsolete.
However, since enumeration and counting is also much of interest in the
context of arrangements we decided to move the corresponding results
to~\cite{FelsnerScheucher2019}, where we investigate (not necessarily
intersecting) arrangements and focus on circularizability. The
arrangements and more information can be also be found on our companion
website~\cite{scheucher_website}.

%%%%%%%%%%%%%%%%%%%%%%%%%%%%%%%%%%%%%%%%%%%%%%%%%%%%%%%%%%%%%%%%%%%%%%%%%%%%
\begin{figure}[htb]
  \centering
   \hfill
  \includegraphics[width=0.29\textwidth]{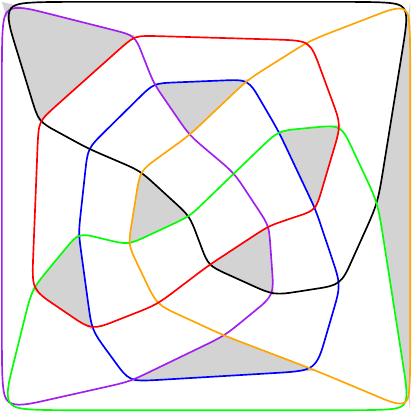}
   \hfill
  \includegraphics[width=0.29\textwidth]{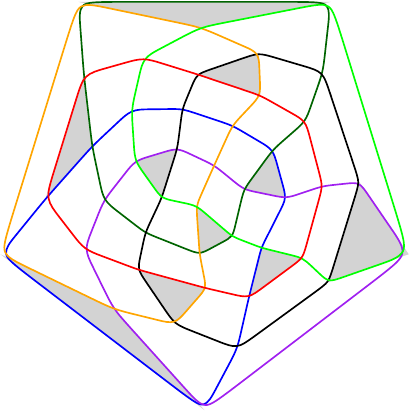}
   \hfill
  \includegraphics[width=0.29\textwidth]{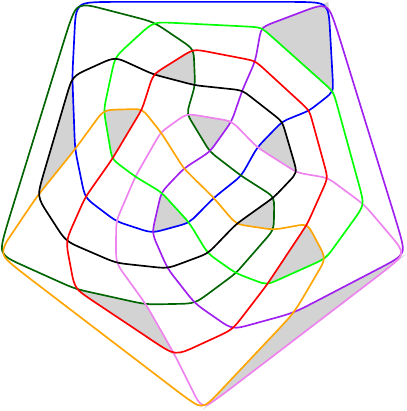}
   \hfill
  \caption{Digon-free intersecting arrangements of $n=6,7,8$ pseudocircles with $8$, $10$, $11$ triangles, respectively.
  The arrangement of $n=6$ pseudocircles on the left-hand side is named~$\AAsixA$.}
  \label{fig:min_tri_6-7-8}
\end{figure}
%%%%%%%%%%%%%%%%%%%%%%%%%%%%%%%%%%%%%%%%%%%%%%%%%%%%%%%%%%%%%%%%%%%%%%%%%%%%

Another result which we obtained from our computer search is the following:
the triangle-minimizing example for $n=6$ is unique, i.e., there is a unique
intersecting arrangement $\AAsixA$ of $6$ pseudocircles with $8$ triangles.
In~\cite{fs-tap-17} and~\cite{FelsnerScheucher2019} we gave 
different proofs for the non-circularizability of~$\AAsixA$.
Since the arrangement $\AAsixA$ appears as a subarrangement of all
arrangements with less than $2n-4$ triangles known to us,
the following weakening of Gr\"unbaum's conjecture might be true.

\begin{conjecture}[Weak Gr\"unbaum Conjecture]\label{conj:weakGr}
Every digon-free intersecting arrangement of $n$ circles has
at least $2n-4$ triangles.
\end{conjecture}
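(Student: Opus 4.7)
The natural plan factors the conjecture through a purely combinatorial statement about pseudocircles: \emph{every digon-free intersecting arrangement of $n$ pseudocircles with $p_3 < 2n-4$ contains $\AAsixA$ as a subarrangement}. If this structural claim holds, the conjecture follows immediately, since circles form a subclass of pseudocircles in which $\AAsixA$ cannot appear, by the non-circularizability of $\AAsixA$ proved in~\cite{fs-tap-17} and~\cite{arXiv_version_circularizability}. This matches the empirical observation recorded in the paper that $\AAsixA$ is a subarrangement of every known digon-free intersecting arrangement violating the $2n-4$ bound.

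To establish the structural claim I would proceed by induction on $n$. The base cases $n \le 7$ are already covered by the exhaustive enumeration of the second author, and the range $n \le 14$ can be handled by the same recursive pseudocircle-insertion procedure used in the proof of Theorem~\ref{thm:tri-main}, directly verifying that $\AAsixA$ embeds in every candidate arrangement. For the inductive step, starting from a digon-free intersecting $\mathcal{A}$ with $p_3(\mathcal{A}) < 2n-4$, the strategy is to find a pseudocircle $C$ whose deletion (a) leaves the arrangement digon-free and intersecting, and (b) decreases the triangle count by at most $2$. Condition (b) is essentially an averaging argument: writing $t(C)$ for the number of triangles incident to $C$, we have $\sum_C t(C) = 3 p_3(\mathcal{A}) < 6n-12$, so most pseudocircles touch few triangles. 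Condition (a) should hold as long as $C$ has no crossing that is squeezed between two consecutive crossings on a third pseudocircle, since precisely such configurations produce digons upon deletion.

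The principal obstacle is the coupling of (a) and (b): arrangements close to the $\AAsixA$ threshold appear to be exactly those in which every pseudocircle is tightly embedded among its neighbours, so clean deletions may not exist. A plausible way around this is a discharging scheme in which already-visible copies of $\AAsixA$ act as charge sinks and absorb the defect whenever no good deletion is available. An alternative, more geometric route uses a Möbius transformation to turn one circle into a line and reduces the problem to digon-free intersecting arrangements of $n-1$ circles together with one pseudoline, which might bypass digon creation but still requires the same structural understanding of $\AAsixA$. In either approach, showing that $\AAsixA$ is unavoidable in all sub-$2n-4$ arrangements looks like the crux, and is presumably the reason the statement remains a conjecture rather than a theorem.
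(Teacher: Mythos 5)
The statement you are addressing is Conjecture~\ref{conj:weakGr}; the paper gives no proof of it, only a verification for $n\le 9$ (justified via Lemma~\ref{lem:p3+2p2}) and the partial results of Theorems~\ref{thm:LeviGeneralized} and~\ref{thm:LeviGeneralizedPlus}, so there is no proof of the paper's to compare yours against. Your opening reduction is sound as far as it goes: a subarrangement of an arrangement of circles is again an arrangement of circles, so the non-circularizability of $\AAsixA$ does imply that no circle arrangement contains $\AAsixA$, and the conjecture would follow from the structural claim that every digon-free intersecting arrangement of pseudocircles with $p_3<2n-4$ contains $\AAsixA$. But that structural claim is exactly the open content: in the paper it is only an empirical observation about the known examples (and the conjecture is verified only for $n\le 9$, not $n\le 14$), and your proposal does not establish it.

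The inductive step you sketch fails already in outline, in a way you can check. To apply the inductive hypothesis to $\mathcal{A}-C$ you need $p_3(\mathcal{A}-C)<2(n-1)-4=2n-6$; since you only know $p_3(\mathcal{A})\le 2n-5$, you need the deletion to \emph{decrease} the triangle count by at least~$2$, not by at most~$2$ as in your condition~(b) --- the inequality points the wrong way for the induction to close (the same reversal occurs if you instead induct on the contrapositive). Moreover the averaging bound $\sum_C t(C)=3p_3(\mathcal{A})<6n-12$ only yields a pseudocircle incident to at most five triangles, while the Snoeyink--Hershberger argument already forces every pseudocircle of a digon-free intersecting arrangement to be incident to at least four, so there is essentially no slack to exploit; and $t(C)$ controls neither the net change of $p_3$ under deletion (cells merge and new triangles can appear; Lemma~\ref{lem:p3+2p2} only bounds $p_3+2p_2$ from above for the subarrangement) nor the creation of digons, which is your condition~(a). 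The discharging and M\"obius-transformation remarks are stated at the level of hope rather than argument. In short, your plan correctly identifies the heuristic the authors themselves offer as motivation for the conjecture, but it is a research programme with an incorrectly oriented key inequality, not a proof --- as you yourself concede in your final sentence.
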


If this conjecture was true, it would imply a simple non-circularizability
criterion for intersecting arrangements: Any arrangement 
with $p_3 < 2n-4$ could directly be classified as
non-circularizable. 

So far we know that this conjecture is true for all $n\leq 9$.  The claim,
that we have checked all intersecting arrangements with $p_3(\AA) <
2n-4$ in this range, is justified by the following lemma, which restricts the
pairs $(p_2,p_3)$ for which there can exist arrangements of $n$ pseudocircles
whose extensions have $p_3(\AA) < 2n-4$. For example, to get all
digon-free intersecting arrangements of $n=9$ pseudocircles with $p_3 \le 13$ triangles, we
only had to extend intersecting arrangements of $n=7$ and $n=8$ pseudocircles 
with $p_3 + 2p_2 \leq 13$ triangles.

\begin{lemma}\label{lem:p3+2p2}
Let $\AA$ be an intersecting arrangement of pseudocircles.
Then for every subarrangement $\AA'$ of $\AA$ we have
$$
p_3(\AA')+ 2p_2(\AA') \le p_3(\AA) + 2p_2(\AA).
$$
\end{lemma}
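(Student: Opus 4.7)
The plan is to reduce to the single-pseudocircle addition $\mathcal{A}=\mathcal{A}'\cup\{C\}$ by iterating, which is valid since every subarrangement of an intersecting arrangement is again intersecting. Define a face weight by $\phi(R)=2$ if $R$ is a digon, $\phi(R)=1$ if $R$ is a triangle, and $\phi(R)=0$ otherwise, so that $p_3(\mathcal{B})+2p_2(\mathcal{B})=\sum_R\phi(R)$ over faces of $\mathcal{B}$. Every face $F$ of $\mathcal{A}'$ is partitioned into the faces of $\mathcal{A}$ contained in $F$, so the lemma reduces to the local inequality
$$\sum_{R\subseteq F}\phi(R)\;\ge\;\phi(F)\qquad\text{for every face }F\text{ of }\mathcal{A}'.$$

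The intersection $C\cap F$ is a finite collection of pairwise disjoint arcs (``chords'') with endpoints on $\partial F$; indeed $C$ cannot lie entirely inside $F$, since otherwise $C$ would be disjoint from every other pseudocircle, violating the intersecting hypothesis. These chords are non-crossing in the topological disk $F$, so they may be added one at a time with each new chord lying inside a single current sub-region. By induction on the number of chords, the local inequality reduces to the following single-split claim: whenever a region $R$ with $c$ boundary crossings is split by one chord into pieces $R_1$ and $R_2$ with $c_1$ and $c_2$ boundary crossings, then $\phi(R_1)+\phi(R_2)\ge\phi(R)$.

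The key counting identity is $c_1+c_2=c+4$: each of the $c$ old corners survives on exactly one of the two pieces, while the two new chord endpoints are boundary crossings of \emph{both}. Using that every face of an arrangement of at least two pairwise intersecting pseudocircles has at least two corners (so $c_i\ge 2$), only the cases $c\in\{2,3\}$ are nontrivial. For $c=2$ the admissible pairs $(c_1,c_2)$ are $(2,4),(3,3),(4,2)$, giving $\phi$-sums $2,2,2$; for $c=3$ the pairs are $(2,5),(3,4),(4,3),(5,2)$, giving $\phi$-sums $2,1,1,2$. In every case $\phi(R_1)+\phi(R_2)\ge\phi(R)$, with equality throughout when $R$ is a digon. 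Summing along the chord-by-chord induction gives the local inequality, and summing over faces of $\mathcal{A}'$ yields the lemma. The main conceptual observation is that a digon carries its full weight of $2$ through any split; the only delicate bookkeeping is correctly double-counting the two new chord endpoints in the identity $c_1+c_2=c+4$, after which the case check is mechanical.
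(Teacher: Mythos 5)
Your proof is correct and follows essentially the same route as the paper: reduce to the removal of a single pseudocircle, reinsert its arcs (chords) one at a time, and verify that the quantity $p_3+2p_2$ cannot decrease under any single split of a digon or a triangle. Your weight function $\phi$ and the identity $c_1+c_2=c+4$ merely systematize the case analysis that the paper carries out directly (digon $\to$ digon $+$ rest or two triangles; triangle $\to$ triangle $+$ quadrangle or digon $+$ pentagon).
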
 

\begin{proof}
  We show the statement for a subarrangement $\AA'$ in which one
  pseudocircle~$C$ is removed from $\AA$. 
  The inequality then follows by iterating the argument.
  The arrangement $\AA'$ partitions the pseudocircle~$C$ into arcs.
  Reinsert these arcs one by one.

  Consider a triangle of~$\AA'$. 
  After adding an~arc, one of the following cases occurs:
  (1) the triangle remains untouched, or 
  (2) the triangle is split into a triangle and a quadrangle, or 
  (3) a digon is created in the region of the triangle. 
  
  Now consider a digon of~$\AA'$. 
  After adding an arc, one of the following cases occurs:
  (1)~the digon remains untouched, or 
  (2)~there is a new digon inside this digon, or 
  (3)~the digon has been split into two triangles.
\end{proof}

Levi~\cite{l-dtdpe-26} has shown that every arrangement of pseudolines in the
real projective plane has at least $n$ triangles.  Since arrangements of
great-(pseudo)circles are in bijection to arrangements of (pseudo)lines (the
bijection is explained in Section~\ref{sec:great-pseudocircles}), it directly
follows that every arrangement of great-pseudocircles has at least $2n$
triangles. The next theorem applies the same idea to 
a superclass of great-pseudocircle arrangements.
We think of the theorem as support of
the Weak Gr\"unbaum Conjecture (Conjecture~\ref{conj:weakGr}). 

\begin{theorem}
\label{thm:LeviGeneralized}
Let $\AA$ be an intersecting arrangement of $n$ pseudocircles such
that there is a pseudocircle $C$ in~$\AA$ that separates the two
intersection points $C' \cap C''$ of any other two pseudocircles $C'$
and~$C''$ in~$\AA$.  Then the number of triangles in $\AA$ is
at least~$2n$.
\end{theorem}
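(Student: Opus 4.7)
The plan is to reduce the statement to Levi's lower bound for projective pseudoline arrangements, applied separately to each of the two discs bounded by~$C$. Denote these closed discs on the sphere by~$D^+$ and~$D^-$. Since $\mathcal{A}$ is pairwise intersecting, each other pseudocircle~$C_i$ meets~$C$ at two points $a_i,b_i$ and splits as $\alpha_i^+\cup\alpha_i^-$ with $\alpha_i^\pm\subset D^\pm$. The hypothesis that~$C$ separates $C_i\cap C_j$ says exactly that the arcs $\alpha_i^+$ and~$\alpha_j^+$ meet in one interior point of~$D^+$ (and likewise in~$D^-$); so the $n-1$ arcs inside~$D^+$ form a simple pseudochord arrangement in which every pair crosses exactly once.

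First I would observe that since each pair of arcs in~$D^+$ crosses an odd number of times, a standard parity/Jordan-curve argument forces the endpoint pairs $\{a_i,b_i\}$ and $\{a_j,b_j\}$ to interleave on~$C$ for every $i\neq j$. A short counting argument then shows that the only perfect matching on $2(n-1)$ cyclically ordered points in which every two pairs interleave is the antipodal matching $x_j\leftrightarrow x_{j+(n-1)}$; hence the pairing $a_i\leftrightarrow b_i$ extends to a fixed-point-free orientation-preserving involution~$\phi$ of the circle~$C$.

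Next, I would form $\mathbb{RP}^2$ as the quotient of $D^+$ under the identification $x\sim\phi(x)$ on~$C$, with quotient map~$q$. Each arc $\alpha_i^+$ has identified endpoints $q(a_i)=q(b_i)$ and so descends to a simple non-contractible loop $L_i\subset\mathbb{RP}^2$; $q(C)$ is another such loop. Any two of these $n$ loops meet in exactly one point, so $\overline{\mathcal{P}}:=\{L_1,\dots,L_{n-1},q(C)\}$ is a simple projective pseudoline arrangement with $n$ pseudolines, and Levi's theorem supplies at least~$n$ triangles in~$\overline{\mathcal{P}}$. Because $q$ is a homeomorphism on $\operatorname{int}(D^+)$ and $\mathbb{RP}^2\setminus q(C)=q(\operatorname{int}(D^+))$, pulling back along~$q$ yields a side-preserving bijection between cells of~$\overline{\mathcal{P}}$ and cells of~$\mathcal{A}$ lying inside~$D^+$ --- arcs of~$L_i$ correspond to arcs of~$\alpha_i^+$, and arcs of~$q(C)$ correspond to arcs of~$C$. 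Thus $\mathcal{A}$ has at least~$n$ triangles inside~$D^+$; the identical construction applied to~$D^-$ supplies another~$n$, and since every triangle of~$\mathcal{A}$ lies strictly on one side of~$C$ the total is at least~$2n$.

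The hard part will be the topological bookkeeping that legitimizes the quotient construction: extending the finite endpoint matching to a consistent involution~$\phi$, checking that the~$L_i$ genuinely become simple projective pseudolines with pairwise unique crossings, and verifying that the cell bijection really preserves the number of sides. In particular one must exclude the possibility that a cell of~$\mathcal{A}$ in~$D^+$ has two antipodally identified $C$-arcs on its boundary --- this follows a posteriori from the fact that cells of a simple projective pseudoline arrangement are open topological discs.
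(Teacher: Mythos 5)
Your proof follows exactly the same route as the paper's: cut along $C$, turn each closed hemisphere into a projective arrangement of $n$ pseudolines (the $n-1$ arcs plus the image of $C$ itself), apply Levi's bound of $n$ triangles to each side, and add the two contributions. The paper states this in two sentences without the quotient-space bookkeeping, so your additional verifications (interleaving of the endpoint pairs, the involution $\phi$, the side-preserving cell correspondence) are a careful elaboration of the same argument rather than a different one.
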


\begin{proof}
  Since, for every two pseudocircles $C'$ and~$C''$ distinct from~$C$, the two
  intersection points of $C' \cap C''$ are separated by the pseudocircle~$C$,
  the pseudocircle $C$ ``partitions'' the arrangement $\AA$ into two
  projective arrangements of $n$ pseudolines which lie in the two respective
  hemispheres.  According to Levi~\cite{l-dtdpe-26}, there are at least $n$
  triangles in each of the two arrangements, thus the original arrangement
  $\AA$ contains at least $2n$ triangles.
\end{proof}

Felsner and Kriegel~\cite{FK98} have shown that every arrangement of $n$
pseudolines in the Euclidean plane has at least $n-2$ triangles. 
This can again be turned into a result about triangles in
arrangements of pseudocircles.

\begin{theorem}
\label{thm:LeviGeneralizedPlus}
Let $\AA$ be an intersecting arrangement of $n$ pseudocircles.  If
$\AA$ can be extended by another pseudocircle $C$ such that the
pseudocircle $C$ separates the two intersection points $C' \cap C''$ of any
other two pseudocircles $C'$ and~$C''$, then the number of triangles in the
original arrangement $\AA$ is at least~$2n-4$.
\end{theorem}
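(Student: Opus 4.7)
The plan is to mimic the proof of Theorem~\ref{thm:LeviGeneralized}, using the extending pseudocircle $C$ to split the sphere into two hemispheres and to interpret each one as a pseudoline arrangement. The crucial difference is that here $C$ does not belong to $\mathcal{A}$, so each hemisphere will carry $n$ arcs (one per pseudocircle of $\mathcal{A}$) rather than $n-1$, and instead of Levi's projective bound I will apply the Euclidean bound of Felsner--Kriegel~\cite{FK98}, which gives $n-2$ triangles per hemisphere, for a total of $2n-4$.

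Concretely, I would first note that $C$ divides the sphere into two open hemispheres $H^+$ and $H^-$, and that every $C' \in \mathcal{A}$ is cut by $C$ into a single arc in each of $H^+$ and $H^-$. By the separation hypothesis, for any two pseudocircles $C', C'' \in \mathcal{A}$ the two points of $C' \cap C''$ lie on opposite sides of $C$, so inside each hemisphere the arcs of $C'$ and $C''$ meet in exactly one point. Identifying a hemisphere with the Euclidean plane via a homeomorphism sending $C$ to the point at infinity, the $n$ restricted arcs become a simple arrangement of $n$ pseudolines in the plane (simplicity is inherited from the simplicity of $\mathcal{A} \cup \{C\}$, which the hypothesis allows us to assume).

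Next, I would apply the Felsner--Kriegel bound to each of these two Euclidean pseudoline arrangements to obtain at least $n-2$ bounded triangular cells per hemisphere. The remaining step is to lift these bounded triangles back to triangles of $\mathcal{A}$: a bounded triangle of the hemisphere arrangement is a face that does not touch $C$, bounded by three arcs of pseudocircles of $\mathcal{A}$; since $C \notin \mathcal{A}$ and the face avoids $C$, it is unaffected by removing $C$ from the extended arrangement $\mathcal{A} \cup \{C\}$ and is therefore a triangle of $\mathcal{A}$. Triangles produced in $H^+$ and in $H^-$ lie on opposite sides of $C$ and so are distinct, yielding a total of at least $2(n-2) = 2n-4$ triangles in $\mathcal{A}$.

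The main obstacle is the bookkeeping in the final lifting step: one must be careful to distinguish between faces of $\mathcal{A}$, faces of $\mathcal{A} \cup \{C\}$, and bounded faces of the Euclidean pseudoline arrangement obtained in each hemisphere, and to verify that the correspondence between bounded triangles of the latter and triangles of $\mathcal{A}$ really does rely only on the separation property of $C$. Once this is made precise, the rest of the argument is an immediate application of Felsner--Kriegel's Euclidean triangle bound in both hemispheres.
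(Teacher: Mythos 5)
Your proposal is correct and follows essentially the same route as the paper's own proof: split the sphere along the extending pseudocircle $C$ into two hemispheres, view the restricted arcs as two Euclidean arrangements of $n$ pseudolines with $C$ as the line at infinity, apply the Felsner--Kriegel bound of $n-2$ triangles to each, and observe that these bounded triangles avoid $C$ and hence survive as triangles of $\mathcal{A}$. The extra care you take in the lifting step (distinguishing faces of $\mathcal{A}$, of $\mathcal{A}\cup\{C\}$, and of the hemisphere arrangements) only makes explicit what the paper leaves implicit.
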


\begin{proof}
  Since, for every two pseudocircles $C'$ and~$C''$ distinct from~$C$, the two
  intersection points of $C' \cap C''$ are separated by~$C$, the pseudocircle
  $C$ splits the arrangement $\AA$ into two Euclidean arrangements of
  $n$ pseudolines which lie in the two respective hemispheres.  According to
  Felsner and Kriegel~\cite{FK98}, there are at least $n-2$ triangles in each
  of the two arrangements.  Since the extending pseudocircle $C$ (which can be
  considered as the line at infinity in the respective Euclidean pseudoline
  arrangements) is not incident to any of these triangles, the arrangement
  $\AA$ contains at least $2n-4$ triangles.
\end{proof}

We now prepare for the proof of
Theorem~\ref{thm:tri-main}(\ref{thm:tri-main:3}), for which we construct a
family of (non-circularizable) intersecting arrangements of $n$ pseudocircles
with less than $\frac{16}{11}n$ triangles. The basis of
the construction is the arrangement $\AA_{12}$ of 12 pseudocircles
with 16 triangles shown in
Figure~\ref{fig:n12_p3_16_plus_recursion}(\subref{fig:n12_p3_16}). This
arrangement will be used iteratively for a `merge' as described by the
following lemma.

%%%%%%%%%%%%%%%%%%%%%%%%%%%%%%%%%%%%%%%%%%%%%%%%%%%%%%%%%%%%%%%%%%%%%%%%%%%%
\begin{figure}[htb]
  \centering
  \begin{subfigure}[t]{.48\textwidth}
    \includegraphics[page=1,width=0.95\textwidth]{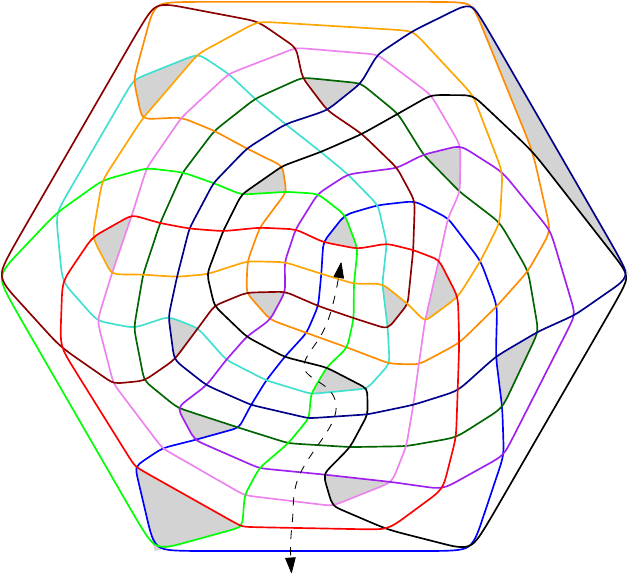}
    \caption{}
    \label{fig:n12_p3_16}
  \end{subfigure}
  \hfill
  \begin{subfigure}[t]{.5\textwidth}
    \includegraphics[page=1,width=0.95\textwidth]{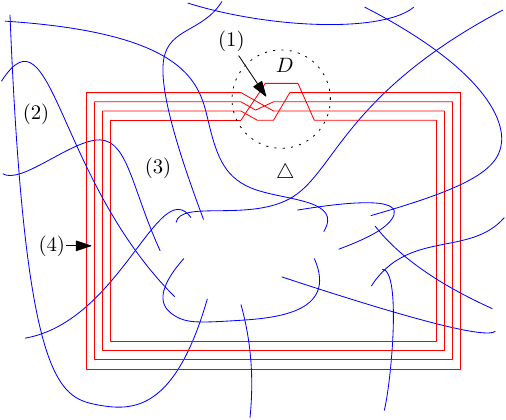}
    \caption{}
    \label{fig:recursion}  
  \end{subfigure}
  \caption{(\subref{fig:n12_p3_16})~The digon-free intersecting arrangement $\AA_{12}$ of 
    12 pseudocircles with exactly 16 triangles.
    The dashed curve intersects every pseudocircle exactly once.
    (\subref{fig:recursion})~An illustration of the construction in
      Lemma~\ref{lem:merge_A_plus_B}.
      Pseudocircles of~$\AA$ and $\BB$ are drawn red and blue, respectively.
  }
  \label{fig:n12_p3_16_plus_recursion} 
\end{figure}
%%%%%%%%%%%%%%%%%%%%%%%%%%%%%%%%%%%%%%%%%%%%%%%%%%%%%%%%%%%%%%%%%%%%%%%%%%%%

\begin{lemma}\label{lem:merge_A_plus_B}
  Let $\AA$ and $\BB$ be digon-free intersecting arrangements of
  $n_\AA \ge 3$ and~$n_\BB \ge 3$ pseudocircles, respectively.
  If there is a simple curve $P_\AA$ that 
  \begin{enumerate}[(1)]
   \item \label{item:lem:merge_A_plus_B:item1}
   intersects every pseudocircle of $\AA$ exactly once 
   \item 
   contains no vertex of $\AA$, 
   \item 
   traverses $\tau \ge 1$ triangles of~$\AA$, and 
   \item 
   forms $\delta$ triangles with pairs of pseudocircles from~$\AA$, 
  \end{enumerate}
  then there is a digon-free
  intersecting arrangement $\CC$ of $n_\AA+n_\BB-1$
  pseudocircles with $p_3(\CC) = p_3(\AA) +
  p_3(\BB) + \delta - \tau -1$ triangles.
\end{lemma}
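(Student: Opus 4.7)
The plan is to construct $\mathcal{C}$ by an explicit topological surgery that splices a copy of $\mathcal{B}$ (minus one pseudocircle) into the neighborhood of $P_\mathcal{A}$ in $\mathcal{A}$, and then to derive the triangle count by bookkeeping how faces transform under the surgery.

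For the construction, I would pick any pseudocircle $C_\mathcal{B} \in \mathcal{B}$ and let $D_\mathcal{B}$ be one of the two open disks it bounds on the sphere; this disk carries the remaining $n_\mathcal{B}-1$ pseudocircles of $\mathcal{B}$ as pseudo-arcs with both endpoints on $C_\mathcal{B}$. On the $\mathcal{A}$-sphere, I would take a thin tubular neighborhood $U$ of $P_\mathcal{A}$; by condition~(\ref{item:lem:merge_A_plus_B:item1}) of the lemma, its boundary $\partial U$ is a simple closed curve crossing each pseudocircle of $\mathcal{A}$ in exactly two points. Excise $\operatorname{int}(U)$ and glue in $\overline{D_\mathcal{B}}$ via the identification $\partial U \equiv C_\mathcal{B}$. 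Inside the glued region, extend each $\mathcal{A}$-pseudocircle (which now has two endpoints on $C_\mathcal{B}$) along a path crossing every blue pseudo-arc in exactly two points; such a routing exists because the blue pseudo-arcs inside $D_\mathcal{B}$ form a pseudoline-like structure. The resulting arrangement $\mathcal{C}$ has $n_\mathcal{A}+(n_\mathcal{B}-1)$ pseudocircles, since $C_\mathcal{B}$ has been absorbed as the former $\partial U$ and is no longer a distinct pseudocircle.

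Next I would verify that $\mathcal{C}$ is a simple, digon-free, pairwise intersecting arrangement. Simplicity and pairwise intersection in exactly two crossings follow from the construction. Digon-freeness reduces to the digon-freeness of $\mathcal{A}$ and of $\mathcal{B}$, combined with condition~(2) that $P_\mathcal{A}$ avoids every vertex of $\mathcal{A}$, which prevents any $2$-face from being created at the interface.

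For the triangle count, I would partition the triangles of $\mathcal{C}$ by their position relative to the glued disk. First, the $p_3(\mathcal{A})-\tau$ triangles of $\mathcal{A}$ not traversed by $P_\mathcal{A}$ survive intact outside the glued region. Second, the $\delta$ triangles in $\mathcal{A}\cup\{P_\mathcal{A}\}$ that are bounded by an arc of $P_\mathcal{A}$ and arcs of two pseudocircles of $\mathcal{A}$ become triangles of $\mathcal{C}$, with an arc of $C_\mathcal{B}$ now playing the role of $P_\mathcal{A}$. Third, the triangles of $\mathcal{B}$ lying in $D_\mathcal{B}$ are preserved, contributing $p_3(\mathcal{B})$. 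Finally, exactly one further triangle is lost at the interface; this single discrepancy is the source of the $-1$ term and is the most delicate part of the bookkeeping, most plausibly traced to a triangle of $\mathcal{B}$ incident to $C_\mathcal{B}$ that is absorbed into the extended $\mathcal{A}$-arcs, or to an endpoint-of-$P_\mathcal{A}$ effect where two $\delta$-type triangles share an edge. Summing yields $p_3(\mathcal{C})=p_3(\mathcal{A})+p_3(\mathcal{B})+\delta-\tau-1$, as claimed. I expect the main obstacle to be the rigorous verification that the routing of $\mathcal{A}$-pseudocircles through $D_\mathcal{B}$ can be chosen to satisfy all pairwise intersection and digon-free conditions simultaneously, together with pinpointing the exact face responsible for the $-1$ correction.
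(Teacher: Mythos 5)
Your surgery goes in the wrong direction, and as described it does not produce an arrangement of pseudocircles at all. After you excise the tube $U$ around $P_\mathcal{A}$ and glue in $\overline{D_\mathcal{B}}$ along $\partial U\equiv C_\mathcal{B}$, that curve becomes an interior curve of the new sphere. The $n_\mathcal{B}-1$ ``blue'' pseudo-arcs you imported are arcs with both endpoints on this interface, and nothing in your construction closes them up into closed curves: you extend the $\mathcal{A}$-pseudocircles across the glued disk, but the $\mathcal{B}$-arcs simply terminate on $C_\mathcal{B}$. Even setting this aside, the restriction of $\mathcal{B}$ to one of the two disks bounded by $C_\mathcal{B}$ loses crossings: two pseudocircles $B_1,B_2\neq C_\mathcal{B}$ may have one or both of their two intersection points outside $D_\mathcal{B}$, so the arcs you keep need not cross twice (they can cross once or not at all), and the ``pseudoline-like structure'' you invoke to route the $\mathcal{A}$-extensions does not exist in general. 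Finally, your bookkeeping never actually locates the $-1$; you offer two speculative sources for it, which signals that the face-level analysis has not been carried out.

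The paper's construction inverts the roles. Condition~(1) on $P_\mathcal{A}$ certifies that $\mathcal{A}$ is cylindrical: puncturing the two cells containing the ends of $P_\mathcal{A}$ gives a drawing of $\mathcal{A}$ on an annulus in which every pseudocircle is non-contractible. This annulus is stretched into a narrow ``belt'' with all crossings of $\mathcal{A}$ confined to a small ``belt-buckle'' disk, and the belt is substituted for a single pseudocircle $B$ of $\mathcal{B}$, where $B$ is chosen incident to a triangle $\triangle$ of $\mathcal{B}$ and the buckle is placed on the edge $b$ of $B$ bounding $\triangle$. Every pseudocircle of $\mathcal{A}$ then becomes a closed curve parallel to $B$, every other pseudocircle of $\mathcal{B}$ survives intact and crosses each of them exactly twice (transversally through the belt), and the count falls out cleanly: $\triangle$ is destroyed and, because $\mathcal{B}$ is digon-free, the other cell incident to $b$ was not a triangle, which gives exactly the $-1$; the $\tau$ triangles of $\mathcal{A}$ traversed by $P_\mathcal{A}$ are split by the two non-$B$ pseudocircles bounding $\triangle$ (which run through the buckle along $P_\mathcal{A}$), and those same pseudocircles create the $\delta$ new triangles. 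Your high-level instinct (splice, then do face bookkeeping) is right, but the specific splice you chose cannot be repaired without essentially rediscovering the belt construction.
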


We remark that condition~(\ref{item:lem:merge_A_plus_B:item1}) from the
statement of Lemma~\ref{lem:merge_A_plus_B} asserts that $\AA$ is
cylindrical.  Moreover, if $\BB$ is cylindrical, then also
$\CC$ is cylindrical.

\begin{proof}
  Take a drawing of~$\AA$ and make a hole in the two cells
  which contain the ends of~$P_\AA$.  This yields a drawing
  of~$\AA$ on a cylinder such that none of the pseudocircles
  is contractible.  The path $P_\AA$
  connects the two boundaries of the cylinder. In fact, the existence
  of a path with the properties of $P_\AA$ characterizes
  cylindrical arrangements. 

  Stretch the cylindrical 
  drawing such that it becomes a narrow belt, where all intersections
  of pseudocircles take place in a small disk, which we call
  \emph{belt-buckle}. This drawing of~$\AA$ is called a
  \emph{belt drawing}. The drawing of the red subarrangement in
  Figure~\ref{fig:n12_p3_16_plus_recursion}(\subref{fig:recursion}) shows a belt drawing.

Choose a triangle $\triangle$ in $\BB$ and a pseudocircle $B$ which is
incident to $\triangle$. Let~$b$ be the \emph{edge} of~$B$ on the
boundary of~$\triangle$. 
Specify a disk~$D$, which is traversed by~$b$
and disjoint from all other edges of~$\BB$. Now
replace~$B$ by a belt drawing of~$\AA$ in a small neighborhood of~$B$
such that the belt-buckle is drawn within~$D$; 
see Figure~\ref{fig:n12_p3_16_plus_recursion}(\subref{fig:recursion}).

The arrangement $\CC$ obtained from 
\emph{merging} $\AA$ and $\BB$, as we just described, 
has $n_\AA + n_\BB - 1$ pseudocircles. Moreover
if $\AA$ and $\BB$ are digon-free/intersecting, then
$\CC$ has the same property.
Most of the cells $c$ of~$\CC$ are of one of the following four types: 
\begin{enumerate}[(1)]
\item All boundary edges of $c$ belong to pseudocircles of~$\AA$.  
\item All boundary edges of $c$ belong to pseudocircles of~$\BB$.  
\item All but one of the boundary edges of $c$ belong to pseudocircles 
of~$\BB$ and the remaining edge belongs to $\AA$.
(These cells correspond to cells of~$\BB$ with a boundary edge on~$B$.) 
\item Quadrangular cells, whose boundary edges alternatingly
belong to $\AA$ and~$\BB$.
\end{enumerate}

From the cells of~$\BB$, only $\triangle$ and the other cell
containing $b$ (which is not a digon since $\BB$ is
digon-free) have not been taken into account.  In~$\CC$, the
corresponding two cells have at least two boundary edges from
$\BB$ and at least two from~$\AA$. Consequently,
neither of the two cells are triangles.  The remaining cells
of~$\CC$ are bounded by pseudocircles from $\AA$
together with one of the two bounding pseudocircles of~$\triangle$
other than~$B$. These two pseudocircles cross through~$\AA$
following the path prescribed by $P_\AA$.  There are~$\delta$
triangles among these cells, but~$\tau$ of these are obtained
because~$P_\AA$ traverses a triangle of~$\AA$. Among
cells of~$\CC$ of types (1) to (4) all the triangles have a
corresponding triangle in~$\AA$ or~$\BB$.
But~$\triangle$ is a triangle of $\BB$ which does not occur in
this correspondence.  Hence, there are $p_3(\AA) +
p_3(\BB) +\delta-\tau-1$ triangles in~$\CC$.  
\end{proof}

%\begin{proof}[Proof of Theorem~\ref{thm:tri-main}(\ref{thm:tri-main:3})]
\noindent
\emph{Proof of Theorem~\ref{thm:tri-main}(\ref{thm:tri-main:3}).}
  We use~$\AA_{12}$, the arrangement shown 
  in Figure~\ref{fig:n12_p3_16_plus_recursion}(\subref{fig:n12_p3_16}),
  in the role of~$\AA$ for our recursive construction.  The dashed
  path in the figure is used as~$P_\AA$ with $\delta=2$ and $\tau=1$.
  Starting with $\CC_1=\AA_{12}$ and defining $\CC_{k+1}$ as the merge of
  $\CC_k$ and~$\AA_{12}$, we construct a sequence $\{\CC_{k}\}_{k \in
    \mathbb{N}}$ of digon-free intersecting arrangements of $n(\CC_{k}) = 11k+1$
  pseudocircles with $p_3(\CC_{k}) = 16k$ triangles.  The fraction
  $16k/(11k+1)$ is increasing with~$k$ and converges to $16/11 =
  1.\overline{45}$ as~$n$ goes to $\infty$.
  \qed
\medskip

We remark that using other arrangements from
Theorem~\ref{thm:tri-main}(\ref{thm:tri-main:2}) (which also admit a
path with $\delta=2$ and $\tau=1$) in the recursion, we obtain 
intersecting arrangements with $p_3 = \lceil \frac{16}{11}n \rceil$ triangles for
all $n \geq 6$.

Since the lower bound $\lceil\frac{4}{3}n\rceil$ is tight for $6 \le n
\le 14$, we believe that the following is true:

\begin{conjecture}\label{conj:4/3_is_tight}
There are digon-free intersecting arrangements $\AA$ of $n$ pseudocircles with 
$p_3(\AA) = \lceil 4n/3 \rceil$ for infinitely many values of~$n$.
\end{conjecture}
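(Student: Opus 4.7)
The plan is to refine the merge construction of Lemma~\ref{lem:merge_A_plus_B} and search for base arrangements whose associated path is more economical than the one in $\mathcal{A}_{12}$. Substituting $p_3(\mathcal{A}) = \tfrac{4}{3}n_\mathcal{A}$ and $p_3(\mathcal{B}) = \tfrac{4}{3}n_\mathcal{B}$ into the merge formula $p_3(\mathcal{C}) = p_3(\mathcal{A}) + p_3(\mathcal{B}) + \delta - \tau - 1$ shows that preserving the ratio exactly would demand $\delta - \tau = -\tfrac{1}{3}$, which is impossible over the integers. Since the conjecture only asks for $\lceil 4n/3\rceil$, however, it suffices to find an $\mathcal{A}^*$ with $n^* \equiv 0 \pmod 3$, $p_3(\mathcal{A}^*) = 4n^*/3$, and a path $P^*$ with $\delta = \tau$; a single merge of $\mathcal{A}^*$ with itself then yields a tight example on $2n^* - 1$ pseudocircles, since $\lceil 4(2n^*-1)/3\rceil = 8n^*/3 - 1$.

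First I would extend the exhaustive enumeration behind part~(\ref{thm:tri-main:2}) of Theorem~\ref{thm:tri-main} to larger $n$ with $n \equiv 0 \pmod 3$, and for each digon-free intersecting arrangement with $p_3 = 4n/3$ search over all simple curves meeting conditions~(1)--(4) of Lemma~\ref{lem:merge_A_plus_B}, tabulating $(\delta, \tau)$ for each candidate path. A single base-and-path pair with $\delta = \tau$ already produces a second tight example; to obtain an infinite sequence one then iterates, defining $\mathcal{C}_{k+1}$ as the merge of $\mathcal{C}_k$ and a fresh copy of $\mathcal{A}^*$. Because the target $\lceil 4n/3\rceil$ depends on the residue class of $n$ modulo $3$, a careful analysis shows that the required path balance alternates between $\delta - \tau = 0$ and $\delta - \tau = -1$ as $k$ advances, so $\mathcal{A}^*$ (or $\mathcal{C}_k$) must admit paths in both categories.

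Second, for the iteration to go through at each step, $P^*$ should be routed in $\mathcal{A}^*$ so that, when the distinguished pseudocircle $B$ of the outer copy is replaced by a belt drawing of the inner copy, the outer path transitions coherently through the belt-buckle into the inner copy's path, producing an explicit path in $\mathcal{C}_k$ whose $(\delta, \tau)$-profile is inherited from $P^*$. If $\mathcal{A}^*$ is chosen with sufficient symmetry (for instance, carrying two genuinely different paths, one with $\delta = \tau$ and one with $\delta = \tau - 1$), then an appropriate alternation preserves $p_3(\mathcal{C}_k) = \lceil 4 n(\mathcal{C}_k)/3\rceil$ for every $k$.

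The hard part will be exhibiting such a balanced path in the first place. The only path identified in $\mathcal{A}_{12}$ so far achieves $(\delta, \tau) = (2,1)$, giving merely $16/11$, and it is conceivable that the unavoidable appearance of the non-circularizable subarrangement $\AAsixA$ in every known sparse example enforces a topological defect that precludes $\delta \le \tau$. Should extended enumeration fail to find a balanced path for substantially larger $n$, one would abandon the single-pseudocircle merge in favour of a wider gadget that removes several pseudocircles of $\mathcal{B}$ at once, so that the arithmetic obstruction $\delta - \tau = -\tfrac{1}{3}$ is replaced by an integer-soluble variant; alternatively, one could try a direct concentric construction inspired by simplicial line arrangements. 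A prolonged absence of any construction, by contrast, would itself be evidence that Conjecture~\ref{conj:4/3_is_tight} should be weakened.
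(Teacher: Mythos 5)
This statement is a \emph{conjecture} in the paper: the authors do not prove it, and offer as evidence only the computational fact that $\lceil 4n/3\rceil$ is attained for all $6\le n\le 14$ (found by incrementally extending triangle- and digon-poor arrangements, not by the merge construction). Your proposal is likewise not a proof but a research programme, and you are candid about this --- you explicitly contemplate failure and the possibility that the conjecture should be weakened. The arithmetic in your first paragraph is sound and worth having: substituting the target ratio into the merge formula $p_3(\mathcal{C})=p_3(\mathcal{A})+p_3(\mathcal{B})+\delta-\tau-1$ does show that exact preservation of $4n/3$ would need $\delta-\tau=-\tfrac{1}{3}$, which explains cleanly why the paper's $\mathcal{A}_{12}$-recursion with $(\delta,\tau)=(2,1)$ can only reach $16/11$; and your observation that a ceiling-tight chain only needs $\delta-\tau\in\{0,-1\}$ in the right residue pattern modulo $3$ is a correct reduction of the conjecture to a concrete search problem.

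The genuine gap is that the crucial object --- a digon-free intersecting arrangement $\mathcal{A}^*$ with $p_3(\mathcal{A}^*)=4n^*/3$ admitting a path $P^*$ satisfying conditions (1)--(4) of Lemma~\ref{lem:merge_A_plus_B} with $\delta=\tau$ (and a second path with $\delta=\tau-1$) --- is never exhibited, and its existence is essentially the whole content of the conjecture. Nothing in the paper or in your argument rules out that every tight arrangement forces $\delta\ge\tau+1$ on every admissible path (note that the lemma requires $\tau\ge 1$, and each triangle traversed by $P_\mathcal{A}$ tends to contribute to $\delta$ as well, so $\delta<\tau$ is a genuinely delicate demand, and $\delta=\tau-1$ with $\tau=1$ would force $\delta=0$). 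Your second paragraph also overcomplicates the iteration: in the merge, the path lives in the belt arrangement $\mathcal{A}^*$ while $\mathcal{C}_k$ plays the role of $\mathcal{B}$, so no path needs to be threaded through $\mathcal{C}_k$ at all; what you actually need is just two paths in the single fixed arrangement $\mathcal{A}^*$, applied in the correct period-three schedule. Until such an $\mathcal{A}^*$ and its paths are produced (or some other construction is found), the statement remains exactly what the paper calls it: a conjecture.
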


%%%%%%%%%%%%%%%%%%%%%%%%%%%%%%%%%%%%%%%%%%%%%%%%%%%%%%%%%%%%%%%%%%%%%%%%%%%%
\subsection{Intersecting Arrangements with Digons}
\label{ssec:arrangement-with-digons}

We know intersecting arrangements of $n \ge 3$ pseudocircles with digons 
and only $n-1$ triangles. The examples depicted
in Figure~\ref{fig:construction_n_minus_1_triangles_with_digons}
are part of an infinite family of such arrangements.
As illustrated, the intersection order with the black circle determines the arrangement.
In fact, it is easy to see that $2^{n-3}$ different arrangements are possible:
Starting with the black, the purple, and the yellow pseudocircles (which give a unique arrangement),
each further pseudocircle has its finger placed 
either immediately to the left or immediately to the right of the previous finger.
Figures~\ref{fig:construction_n_minus_1_triangles_with_digons}(\subref{fig:construction_n_minus_1_triangles_with_digons:A}) 
and~\ref{fig:construction_n_minus_1_triangles_with_digons}(\subref{fig:construction_n_minus_1_triangles_with_digons:B}) 
illustrate the finger-insertion-sequence ``right-right-right-\ldots'' and ``left-right-left-\ldots'', respectively.

%%%%%%%%%%%%%%%%%%%%%%%%%%%%%%%%%%%%%%%%%%%%%%%%%%%%%%%%%%%%%%%%%%%%%%%%%%%%
\begin{figure}[htb]
  \centering
  
  \hbox{}
  \hfill
  \begin{subfigure}[t]{.43\textwidth}
    \centering
    \includegraphics[width=0.9\textwidth]{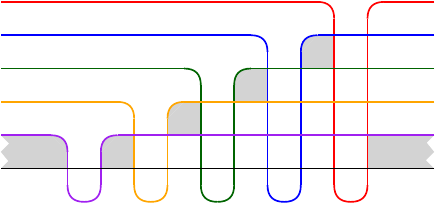}
    \caption{}
    \label{fig:construction_n_minus_1_triangles_with_digons:A}
  \end{subfigure}
  \hfill
  \begin{subfigure}[t]{.43\textwidth}
    \centering
    \includegraphics[width=0.9\textwidth,page=2]{few-tri-with-digons_ipe_B_curved}
    \caption{}
    \label{fig:construction_n_minus_1_triangles_with_digons:B}
  \end{subfigure}
  \hfill
  \hbox{}
  
  \caption{Intersecting arrangements of $n$~pseudocircles with $n$ digons and $n-1$ triangles.
    \label{fig:construction_n_minus_1_triangles_with_digons}
  }
\end{figure}
%%%%%%%%%%%%%%%%%%%%%%%%%%%%%%%%%%%%%%%%%%%%%%%%%%%%%%%%%%%%%%%%%%%%%%%%%%%%

Using ideas based on sweeps (cf.~\cite{SH91}), 
we can show that every pseudocircle 
is incident to at least two triangles.
This implies the following theorem:

\begin{theorem}\label{thm:2n_3_triangles}
Every intersecting arrangement of $n \ge 3$ pseudocircles has
at least $2n/3$ triangles.
\end{theorem}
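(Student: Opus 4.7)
The plan is to reduce the theorem, via a double-counting argument, to the local statement highlighted just before it: every pseudocircle of $\mathcal{A}$ is incident to at least two triangular cells. Granting this, in a simple arrangement each triangle is bounded by exactly three pseudocircles, so
\[
3\,p_3(\mathcal{A}) \;=\; \sum_{C\in\mathcal{A}}\#\{\text{triangles incident to }C\} \;\geq\; 2n,
\]
which yields $p_3(\mathcal{A}) \ge 2n/3$, as desired.

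For the local statement my plan is to adapt the sweep of Snoeyink and Hershberger~\cite{SH91}. Fix a pseudocircle $C$ and treat the two sides of $C$ separately; since every triangle of $\mathcal{A}$ lies on exactly one side of $C$, it is enough to produce one triangle incident to $C$ on each side. The interior of $C$ is a topological disk crossed by $n-1$ pseudo-chords, one per other pseudocircle. Running the S-H sweep inside this disk, starting from an interior cell and pushing outward, terminates at an extremal cell incident to $C$ that is either a digon or a triangle. If it is a triangle, we are done on this side.

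Otherwise the extremal cell is a digon $D$ bordered by $C$ and some pseudocircle $C'$. I would then perform a local \emph{digon-collapse}: reroute $C$ in a small neighbourhood along the arc of $C'$ bounding $D$, obtaining a deformed intersecting arrangement whose combinatorics differ from $\mathcal{A}$ only in that $D$ has been eliminated, while every other cell on this side of $C$ is preserved. Restart the sweep on the modified configuration. Each such collapse strictly decreases the number of digons of $\mathcal{A}$ that are incident to $C$ on the chosen side, so after finitely many steps the sweep terminates at a triangle. Because the collapse is local, this final triangle is a cell of the original $\mathcal{A}$ incident to $C$ on the chosen side. Applying the same argument on the other side of $C$ then supplies the second required triangle.

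The main obstacle is the bookkeeping in the iterated collapse. One must verify that (i) after a collapse the configuration is still a simple intersecting pseudocircle arrangement to which the S-H sweep applies, (ii) no triangle of the deformed arrangement that is incident to $C$ could fail to be a triangle of $\mathcal{A}$ incident to $C$, and (iii) the induction actually terminates with at least one triangle per side rather than with some pathological collapse loop. Making the collapse sufficiently local, in the spirit of the perturbations used in~\cite{SH91}, and choosing a strictly decreasing measure (for instance the number of digons of $\mathcal{A}$ incident to $C$ on the current side) to drive the induction is the delicate step. Once this is in place, the double-counting argument above completes the proof.
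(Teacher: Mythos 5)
Your double-counting reduction to the local statement ``every pseudocircle is incident to at least two triangles'' is exactly the paper's, and that part is fine. The gap is in how you try to prove the local statement: you split into the two sides of $C$ and claim it suffices to find one triangle incident to $C$ on \emph{each} side. That premise is false. Take three pseudocircles: a large one $C$ and two others $P,Y$ that overlap each other outside $C$ and each poke a small disjoint ``finger'' into $C$ (this is the $n=3$ member of the family in Figure~\ref{fig:construction_n_minus_1_triangles_with_digons}). The interior of $C$ then consists of two digons and one quadrangle --- no triangle at all --- while both triangles of the arrangement lie outside $C$. So no amount of sweeping or surgery inside $C$ can produce a triangle of $\mathcal{A}$ there. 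Your digon-collapse also does not behave as described: a digon of $C$ and $C'$ can only be eliminated by removing both crossings of $C$ and $C'$ (otherwise the two crossings still bound a digon somewhere), which destroys the intersecting property and merges the digon with its neighbouring cells rather than ``differing only in that $D$ has been eliminated''; in the example above, collapsing both fingers leaves the interior of $C$ a single empty disk and the iteration terminates with no triangle found.

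The paper avoids this by first proving Lemma~\ref{lem:in-out}: in an intersecting arrangement all digons incident to $C$ lie on the same side of $C$ (if $C'$ and $C''$ both form digons with $C$, they must cross each other on the opposite side, which forces both digons to the same side). One then redraws so that the digon-free side is the outside of $C$ and runs the Snoeyink--Hershberger sweep outward from $C$: since no ``leave a pseudocircle'' move is available, the first move must take a crossing, producing a triangle $\triangle$ incident to $C$. The second triangle is found on the \emph{same} side, not the other one: redraw with $\triangle$ as the unbounded face and sweep the (still digon-free) outside of $C$ again; the first move yields a bounded triangle $\triangle'$ incident to $C$, and $\triangle'\neq\triangle$ because $\triangle$ is now the unbounded cell. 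If you want to keep your two-sided framework you would need a different local invariant, because as written step (iii) of your plan already fails for $n=3$.
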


The proof of the theorem is based on the following lemma:

\begin{lemma}\label{lem:in-out}
  Let $C$ be a pseudocircle in an intersecting arrangement of $n \ge 3$
  pseudocircles.  Then all digons incident to $C$ lie on the same side
  of~$C$.
\end{lemma}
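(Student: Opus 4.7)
The plan is to argue by contradiction. Assume there are digons $D_1$ and $D_2$ incident to $C$ lying on opposite sides of $C$; say $D_1$ lies on the ``outside'' and $D_2$ on the ``inside'' disk bounded by $C$. Let $C_1$ be the second pseudocircle bounding $D_1$ and $C_2$ the second pseudocircle bounding $D_2$. The strategy is to show that the presence of the digon $D_1$ forces very strong restrictions on where other pseudocircles can cross $C_1$, and symmetrically for $C_2$ via $D_2$, which then collides with the intersecting hypothesis.

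The only nontrivial step is to identify the precise shape of $\partial D_1$. Since the arrangement is simple, the two vertices of the digon $D_1$ are crossings involving exactly two pseudocircles, and those two pseudocircles must be $C$ and $C_1$. Because $C$ and $C_1$ have only two common points, these points are forced to be the two vertices of $D_1$. Therefore $\partial D_1$ consists of one arc of $C$ together with one arc of $C_1$ connecting these two points. Since $D_1$ sits on the outside of $C$, the $C_1$-portion of $\partial D_1$ must be the entire arc of $C_1$ lying outside $C$. As $D_1$ is a digon, that arc carries no further vertices in its interior, so \emph{every} intersection point of $C_1$ with a pseudocircle other than $C$ lies on the inside of $C$. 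A symmetric argument applied to $D_2$ shows that every intersection of $C_2$ with a pseudocircle other than $C$ lies on the outside of $C$.

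From here the contradiction is quick. If $C_1=C_2$, then no pseudocircle distinct from $C$ can cross $C_1$; but the intersecting hypothesis together with $n\ge 3$ guarantees a third pseudocircle that must cross $C_1$, a contradiction. If $C_1\neq C_2$, then the two points of $C_1\cap C_2$ would have to lie simultaneously on the inside of $C$ (by the constraint coming from $D_1$) and on the outside of $C$ (by the constraint coming from $D_2$), contradicting $C_1 \cap C_2 \neq \emptyset$.

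I do not anticipate a real obstacle: the whole proof reduces to the topological bookkeeping of the first step, and in particular no sweep machinery in the style of \cite{SH91} is required here. The simplicity assumption is used only so that the two vertices of $D_1$ are forced to be precisely the two points of $C\cap C_1$, which is exactly what drives the constraint on where $C_1$ may be crossed.
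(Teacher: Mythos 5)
Your proof is correct and is essentially the argument the paper gives: a digon of $C$ and $C_1$ on one side of $C$ forces the bounding arc of $C_1$ on that side to be crossing-free, so every other pseudocircle meets $C_1$ only on the opposite side of $C$, which is incompatible with a digon on the other side. You are somewhat more careful than the paper's three-sentence proof, in particular in justifying which arc of $C_1$ bounds the digon and in treating the case $C_1=C_2$ separately, but the underlying idea is identical.
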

\begin{proof}
  Consider a pseudocircle $C'$ that forms a digon $D'$ with $C$ that
  lies, say, ``inside'' $C$.  If $C''$ also forms a digon $D''$ with~$C$, then
  $C''$ has to cross $C'$ in the exterior of $C$. Hence $D''$ also has to lie
  ``inside''~$C$.  Consequently, all digons incident to $C$ lie on the
  same side of~$C$.
\end{proof}

%\begin{proof}[Proof of Theorem~\ref{thm:2n_3_triangles}]
\noindent
\emph{Proof of Theorem~\ref{thm:2n_3_triangles}.}
  Let $\AA$ be an intersecting arrangement and consider a drawing of
  $\AA$ in the plane. Snoeyink and Hershberger~\cite{SH91} have shown that
  starting with any circle~$C$ from $\AA$ the outside of $C$ can be swept
  with a closed curve $\gamma$ until all of the arrangement is inside
  of $\gamma$. During the sweep $\gamma$ intersects every
  pseudocircle from $\AA$ at most twice. The sweep uses two
  types\footnote{%
    There is a third type of move \emph{take a hump} 
    which is the inverse of ``leave a pseudocircle''.
    However, this third type does not occur in the proof of Theorem~\ref{thm:2n_3_triangles} 
    because each two pseudocircles already intersect.} 
  of moves to make progress: 
\begin{enumerate}[(1)]
\item 
\emph{take a crossing}, in~\cite{SH91} this is called `pass a triangle'; 
\item 
\emph{leave a pseudocircle}, this is possible when $\gamma$ and
some pseudocircle form a digon which is on the outside of $\gamma$,
in~\cite{SH91} this is called `pass a hump'.
\end{enumerate}
Figure~\ref{fig:the2flips} gives an illustration of the two possible types of moves.

%%%%%%%%%%%%%%%%%%%%%%%%%%%%%%%%%%%%%%%%%%%%%%%%%%%
\begin{figure}[htb]
  \centering
  \includegraphics{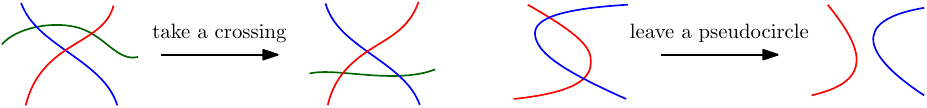}
  \caption{An illustration of the two types of moves which are possible in the proof of Theorem~\ref{thm:2n_3_triangles}.
The blue curve is~$\gamma$. The interior of $\gamma$ is left of the shown part of the curve.}
  \label{fig:the2flips}
\end{figure}
%%%%%%%%%%%%%%%%%%%%%%%%%%%%%%%%%%%%%%%%%%%%%%%%%%%

Let $C$ be a pseudocircle of $\AA$.  By the previous lemma, all digons
incident to $C$ lie on the same side of~$C$.  Redraw $\AA$ so that all
digons incident to $C$ are inside $C$.  The first move of a sweep starting at
$C$ has to take a crossing, and hence, there is a triangle $\triangle$
incident to $C$.  Redraw $\AA$ such that $\triangle$ becomes the
unbounded face.  Again consider a sweep starting at $C$. The first move of
this sweep reveals a triangle~$\triangle'$ incident to $C$. Since $\triangle$
is not a bounded triangle of the new drawing we have $\triangle \neq
\triangle'$, and hence, $C$ is incident to at least two triangles.  The proof
is completed by double counting the number of incidences of triangles and
pseudocircles.  \qed \medskip

Since for $3 \le n \le 7$ every intersecting arrangement
has at least $n-1$ triangles, we believe that the following is true:

\begin{conjecture}\label{conj:general_lower}
Every intersecting arrangement of $n \ge 3$ pseudocircles has
at least $n-1$ triangles.
\end{conjecture}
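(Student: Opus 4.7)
The plan is to attempt induction on the number of pseudocircles $n$, with base case $n=3$ already contained in Theorem~\ref{thm:2n_3_triangles}: the bound $p_3 \ge 2n/3$ specializes to $p_3 \ge 2 = n-1$. For the inductive step with $n \ge 4$, I would split according to the existence of a digon. If $\mathcal{A}$ is digon-free, the Snoeyink-Hershberger inequality $2p_2 + 3p_3 \ge 4n$ yields $p_3 \ge 4n/3 > n-1$, so the claim is immediate.

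Assume then that $\mathcal{A}$ contains at least one digon $D$, bounded by pseudocircles $C_1$ and $C_2$. For any pseudocircle $C$, the sub-arrangement $\mathcal{A} \setminus C$ is intersecting, and by induction $p_3(\mathcal{A} \setminus C) \ge n-2$. A local analysis of the deletion gives
\[ p_3(\mathcal{A}) - p_3(\mathcal{A} \setminus C) \;=\; t(C) - q(C), \]
where $t(C)$ counts triangles of $\mathcal{A}$ incident to $C$ and $q(C)$ counts those incident triangles $T$ whose neighbor $Q$ across the $C$-edge of $T$ is a quadrangle (the merged cell $T \cup Q$ has $3$ boundary edges in $\mathcal{A}\setminus C$ once the two vertices of $T$ on $C$ disappear). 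Therefore it suffices to exhibit some $C$ with $t(C) > q(C)$, i.e., an incident triangle whose across-$C$ neighbor is not a quadrangle.

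To locate such $C$ I would exploit the digon $D$ directly. By Lemma~\ref{lem:in-out} all digons incident to $C_1$ lie on one side of $C_1$, and the sweep argument used in the proof of Theorem~\ref{thm:2n_3_triangles} produces at least two triangles on the digon-free side of $C_1$. The goal is to identify one of these triangles whose across-$C_1$ neighbor is not a quadrangle — for instance, a triangle adjacent to another triangle, to a digon, or to a $\ge 5$-cell. The presence of $D$ should break any hypothetical uniform ``triangle faces quadrangle'' pattern around $C_1$ (or $C_2$), because $D$ itself, together with the structural constraints of Lemma~\ref{lem:in-out}, forces specific non-quadrangular cells nearby.

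The main obstacle is exactly this localization step, and the tightness of the conjectured bound: the family in Figure~\ref{fig:construction_n_minus_1_triangles_with_digons} achieves $p_3 = n-1$ with $p_2 = n$, so the argument must be sharp enough to accommodate arrangements in which most pseudocircles meet the minimum $t(C)=2$ allowed by the proof of Theorem~\ref{thm:2n_3_triangles}. A direct case analysis around $D$ may be fragile; a cleaner route could be a discharging argument, assigning unit charge to each triangle, redistributing charge to incident pseudocircles weighted by the types of across-neighbors, and exploiting Lemma~\ref{lem:in-out} together with the digon-triangle interaction revealed by the sweep, so that each pseudocircle accumulates at least $(n-1)/n$ units of charge on average.
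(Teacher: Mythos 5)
The statement you are asked to prove is labelled a \emph{conjecture} in the paper, and the paper contains no proof of it: the authors only establish the weaker bound $p_3 \ge 2n/3$ (Theorem~\ref{thm:2n_3_triangles}) via the sweep argument, verify $p_3 \ge n-1$ computationally for $3 \le n \le 7$, and exhibit the tight family of Figure~\ref{fig:construction_n_minus_1_triangles_with_digons}. So there is no ``paper proof'' to compare against, and any complete argument you produced would be a new result. Your proposal, however, is not a complete argument --- you say so yourself (``the main obstacle is exactly this localization step''). The inductive framework is set up, but the step that would make it work, namely exhibiting a pseudocircle $C$ with $t(C) > q(C)$, is only described as a goal, with a hope that the digon $D$ ``should break any hypothetical uniform pattern.'' That is precisely the hard part, and nothing in Lemma~\ref{lem:in-out} or the sweep argument pins it down: the sweep gives you two triangles incident to $C$, but says nothing about the cells across the $C$-edges of those triangles.

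There is also a concrete error in the bookkeeping that the induction relies on. The identity $p_3(\mathcal{A}) - p_3(\mathcal{A}\setminus C) = t(C) - q(C)$ is not correct as stated. When $C$ is deleted, a cell with a $k$-gon and an $l$-gon merging across a $C$-edge becomes a cell with $(k-2)+(l-2)$ remaining corners (and the merging can chain when a cell has several edges on $C$). In particular a \emph{digon} on $C$ merging with a pentagon across its $C$-edge produces a new triangle of $\mathcal{A}\setminus C$ that your formula does not account for, and since the whole inductive step is driven by the presence of digons, this case cannot be waved away. Finally, note that the tight examples have $t(C)=2$ for most pseudocircles with both across-neighbors quadrangles, so any deletion-based induction must either choose $C$ very carefully or lose exactly one triangle per step --- there is no slack. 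As it stands the proposal is a reasonable research plan for attacking the conjecture, not a proof.
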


If the arrangement is not required to be intersecting, then the proof of
Lemma~\ref{lem:in-out} fails. Indeed, if the intersection graph of the
arrangement is bipartite, then all faces are of even degree, in particular,
there are no triangles; see
Figure~\ref{fig:interesting_connected_arrangements}(\subref{fig:construction_0_triangles_connected}).

%%%%%%%%%%%%%%%%%%%%%%%%%%%%%%%%%%%%%%%%%%%%%%%%%%%%%%%%%%%%%%%%%%%%%%%%%%%%
\begin{figure}[htb]
  \centering
  
  \hbox{}
  \hfill
  \begin{subfigure}[t]{.45\textwidth}
    \centering
    \includegraphics{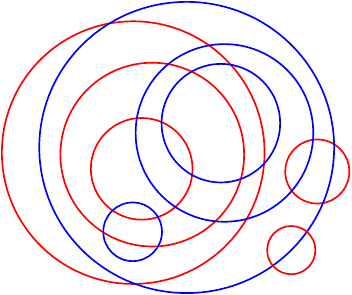}
    \caption{}
    \label{fig:construction_0_triangles_connected}
  \end{subfigure}
  \hfill
  \begin{subfigure}[t]{.45\textwidth}
  \centering
    \includegraphics{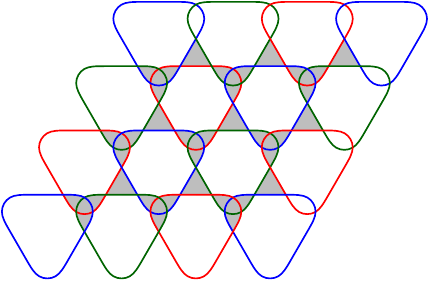}
    \caption{}
  \label{fig:connected_arrangement_56_triangles}
  \end{subfigure}
  \hfill
  \hbox{}
  
  \vspace{0.5cm}
  \begin{subfigure}[t]{.65\textwidth}
  \centering
    \includegraphics{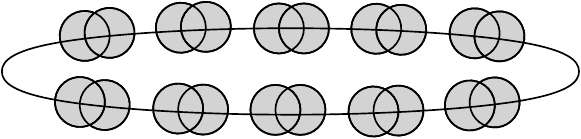}
    \caption{}
  \label{fig:connected_arrangement_all_triangles}
  \end{subfigure}
  
  \caption{Non-intersecting arrangements
    (\subref{fig:construction_0_triangles_connected})~with no triangles,
    (\subref{fig:connected_arrangement_56_triangles})~with a triangle-cell-ratio of $5/6+O(1/\sqrt{n})$, and
    (\subref{fig:connected_arrangement_all_triangles})~with only two non-triangular cells,
    i.e., with a triangle-cell-ratio of~$1+O(1/n)$.
  }
  \label{fig:interesting_connected_arrangements}
\end{figure}
%%%%%%%%%%%%%%%%%%%%%%%%%%%%%%%%%%%%%%%%%%%%%%%%%%%%%%%%%%%%%%%%%%%%%%%%%%%%

%%%%%%%%%%%%%%%%%%%%%%%%%%%%%%%%%%%%%%%%%%%%%%%%%%%%%%%%%%%%%%%%%%%%%%%%%%%%
\section{Maximum Number of Triangles}
\label{sec:max-triangles}

Regarding the maximum number of triangles the complete enumeration\footnote{
While we only have a complete database of
arrangements of up to 7~pseudocircles (cf.\ Table~1 in
\cite{FelsnerScheucher2019}) we could make sure that there is no
arrangement of 8 pseudocircles with at least 39 triangles.  Such an arrangement
$\AA$ would have a pseudocircle $C$ such that the number of triangles
of $\AA' = \AA - C$ would be at least 25.  Our computations showed
that no arrangement $\AA'$ of $7$ pseudocircles with $p_3(\AA') \ge
25$ can be extended to an arrangement of 8 pseudocircles with more than 38
triangles.
}
provides precise data for $n\leq 8$.
Moreover, we used heuristics to generate examples with many triangles for larger~$n$.  
Table~\ref{table:h3_table_upper} summarizes our results and
Figures~\ref{fig:n5_n6_triangle_maximal} and~\ref{fig:n7_n8_triangle_maximal} show
intersecting arrangements of $n=5,6,7,8$ pseudocircles 
with the maximal number of triangles; 
further arrangements are available on our website~\cite{scheucher_website}.

%%%%%%%%%%%%%%%%%%%%%%%%%%%%%%%%%%%%%%%%%%%%%%%%%%%%%%%%%%%%%%%%%%%%%%%%%%%%
\begin{figure}[htb]
  \centering
  \begin{subfigure}[t]{.22\textwidth}
    \centering
    \includegraphics[width=\textwidth]{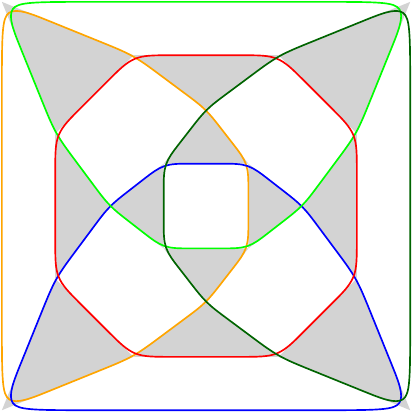}
    \caption{}
    \label{fig:n5_p3_12_digonfree}
  \end{subfigure}
  \hfill
  \begin{subfigure}[t]{.25\textwidth}
    \centering
    \includegraphics[width=\textwidth]{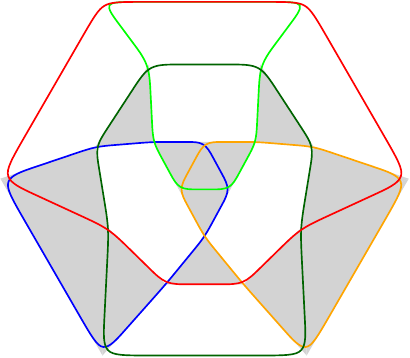}
    \caption{}
    \label{fig:n5_p3_13_digons}
  \end{subfigure}
  \hfill
  \begin{subfigure}[t]{.23\textwidth}
    \centering
    \includegraphics[width=\textwidth]{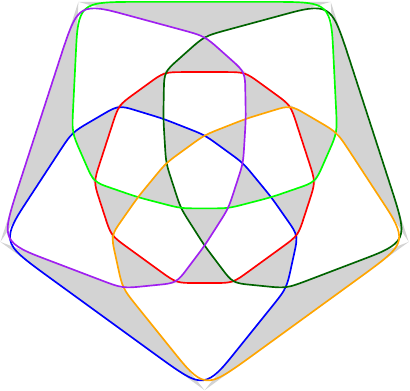}
    \caption{}
    \label{fig:n6_20triangles_df_icosidodecahedron}
  \end{subfigure}
  \hfill
  \begin{subfigure}[t]{.23\textwidth}
    \centering
    \includegraphics[width=\textwidth]{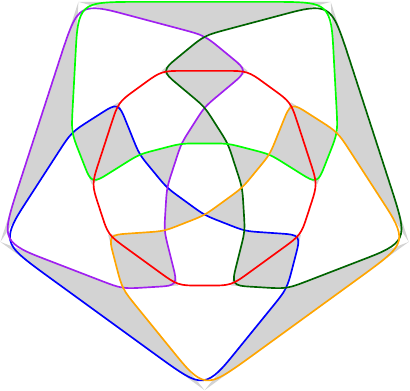}
    \caption{}
    \label{fig:n6_20triangles_df_shifted_icosidodecahedron}
  \end{subfigure}
  
  \caption{
  (\subref{fig:n5_p3_12_digonfree})~and (\subref{fig:n5_p3_13_digons})~show arrangements of $n=5$ pseudocircles. 
  The first one is digon-free and has 12 triangles and the second one has 13 triangles and one digon. 
  (\subref{fig:n6_20triangles_df_icosidodecahedron})~and 
  (\subref{fig:n6_20triangles_df_shifted_icosidodecahedron})~show arrangements of $n=6$ with 20 triangles. 
  The arrangement in (\subref{fig:n6_20triangles_df_icosidodecahedron}) is the skeleton of the Icosidodecahedron.
  }
  \label{fig:n5_n6_triangle_maximal}
\end{figure}	
%%%%%%%%%%%%%%%%%%%%%%%%%%%%%%%%%%%%%%%%%%%%%%%%%%%%%%%%%%%%%%%%%%%%%%%%%%%%

%%%%%%%%%%%%%%%%%%%%%%%%%%%%%%%%%%%%%%%%%%%%%%%%%%%%%%%%%%%%%%%%%%%%%%%%%%%%
\begin{figure}[htb]
  \centering
  
  \hbox{}
  \hfill
  \begin{subfigure}[t]{.32\textwidth}
    \centering
    \includegraphics[width=\textwidth]{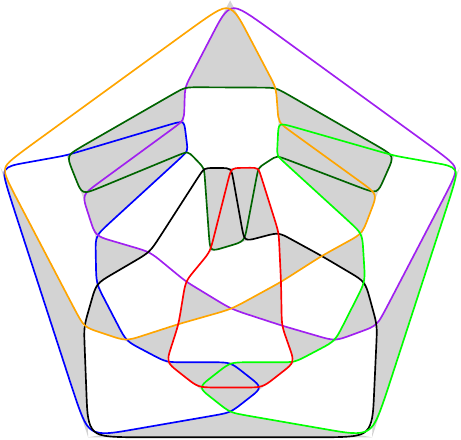}
    \caption{}
    \label{fig:n7_p3_29_digonfree}
  \end{subfigure}
  \hfill
  \begin{subfigure}[t]{.32\textwidth}
    \centering
    \includegraphics[width=\textwidth]{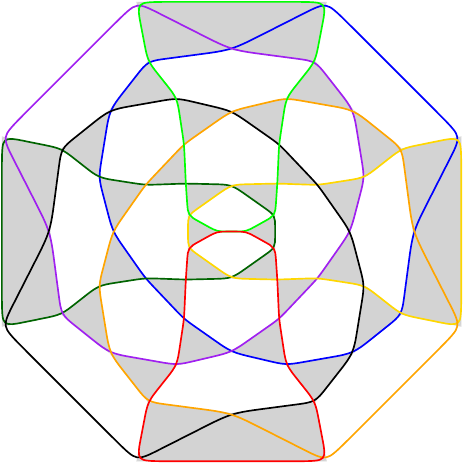}
    \caption{}
    \label{fig:n8_p3_38_digonfree}
  \end{subfigure}
  \hfill
  \hbox{}
  
  \caption{
  (\subref{fig:n7_p3_29_digonfree})~An arrangement of $n=7$ pseudocircles with 29 triangles.\\
  (\subref{fig:n8_p3_38_digonfree})~An arrangement of $n=8$ pseudocircles with 38 triangles.
  }
  \label{fig:n7_n8_triangle_maximal}
\end{figure}	
%%%%%%%%%%%%%%%%%%%%%%%%%%%%%%%%%%%%%%%%%%%%%%%%%%%%%%%%%%%%%%%%%%%%%%%%%%%%

In the next subsection we show that asymptotically the contribution of edges
that are incident to two triangles is neglectable.  The last subsection gives
a construction of intersecting arrangements which show that
$\lfloor\tfrac{4}{3}\binom{n}{2}\rfloor$ is attained for infinitely many
values of $n$.

%%%%%%%%%%%%%%%%%%%%%%%%%%%%%%%%%%%%%%%%%%%%%%%%%%%%%%%%%%%%%%%%%%%%%%%%%%%%
\begin{table}[htb]\centering
\advance\tabcolsep5pt
\def\arraystretch{1.2}
\begin{tabular}{ l|r|r|r|r|r|r|r|r|r}
$n$		&2 	&3 	&4 	&5 	&6 	&7	&8	&9	&10\\
\hline
simple		&0	&8	&8	&13	&20	&29	&38	&$\ge 48$	&$\ge 60$\\
digon-free	&-	&8	&8	&12	&20	&29	&38	&$\ge 48$	&$\ge 60$\\
$\lfloor\tfrac{4}{3}\binom{n}{2}\rfloor$	
		&1	&4	&8	&13	&20	&28	&37	&48	&60
\end{tabular}
%\vskip4mm
\caption{Maximum number of triangles in intersecting arrangements of $n$ pseudocircles.}
\label{table:h3_table_upper}
% \vskip-4mm
\end{table}
%%%%%%%%%%%%%%%%%%%%%%%%%%%%%%%%%%%%%%%%%%%%%%%%%%%%%%%%%%%%%%%%%%%%%%%%%%%%

Recall that we only study simple intersecting
arrangements. Gr\"unbaum~\cite{Gr72} also looked at non-simple
arrangements. His Figures 3.30, 3.31, and 3.32 show 
drawings of simplicial arrangements that have $n=7$ with $p_3= 32$,
$n=8$ with $p_3= 50$, and $n=9$ with $p_3= 62$, respectively.
Hence, non-simple arrangements can have more triangles.

\begin{theorem}\label{thm:upper-max}
Every intersecting arrangement $\AA$ of pseudocircles fulfills
$p_3(\AA)  \le \frac{4}{3}\binom{n}{2}+O(n)$.
\end{theorem}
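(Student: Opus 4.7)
My plan is a double-count of edges against their adjacent triangles. By Euler's formula applied to an intersecting arrangement of $n$ pseudocircles on the sphere, we have $V = n(n-1)$, $E = 2n(n-1)$, and $F = n^2 - n + 2$. Each edge is adjacent to at most two triangular cells (one on each side). Letting $e_i$ denote the number of edges adjacent to exactly $i$ triangles, we have $e_0 + e_1 + e_2 = E$ and $3 p_3 = e_1 + 2 e_2$. Eliminating $e_1$ gives $3 p_3 = E + (e_2 - e_0)$, so the theorem reduces to the claim $e_2 - e_0 \le O(n)$; it would suffice, and be cleaner, to prove the stronger statement $e_2 = O(n)$.

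Next I would analyze $e_2$ structurally. If $e \in e_2$ lies on pseudocircle $C$ with endpoints $v_1 = C \cap D_1$ and $v_2 = C \cap D_2$, then the two triangles $T_1, T_2$ adjacent to $e$ must both have their third vertex at one of the two crossings of $D_1 \cap D_2$. The resulting ``bowtie'' $T_1 \cup e \cup T_2$ fills exactly one of the four digons of the sub-arrangement $\{D_1, D_2\}$, cut diagonally by an arc of $C$ from the $D_1$-side of that digon to its $D_2$-side. For $T_1, T_2$ to be cells of the full arrangement, no pseudocircle other than $C$ may enter this digon, else it would subdivide one of them. Equivalently, both $D_1$-neighbors of $v_1$ along $D_1$ lie on $D_2$, and symmetrically both $D_2$-neighbors of $v_2$ along $D_2$ lie on $D_1$ — a strong local constraint reminiscent of the ``flanked vertex'' configurations arising in the proofs of Lemma~\ref{lem:in-out} and Theorem~\ref{thm:2n_3_triangles}.

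The main remaining step, and where I expect the principal difficulty, is to bound the total number of such bowtie configurations by $O(n)$. The na\"ive flanking count — at most two $D'$-flanked vertices on $D$ per ordered pair $(D, D')$ — yields only $e_2 = O(n^2)$ and is therefore insufficient. To sharpen it I would pursue one of two routes: (i) adapt the sweep technique of Snoeyink--Hershberger used in the proof of Theorem~\ref{thm:2n_3_triangles} to track two-triangle edges encountered by a sweeping curve, observing that each such edge corresponds to a specific transition that can be amortized over the sweep; or (ii) reduce locally to a pseudoline-like arrangement inside and outside $C$ and invoke a zone-theorem-style linear bound on triangular cells adjacent to $C$, exploiting that each bowtie locks down an entire digon of $\{D_1, D_2\}$ that no other pseudocircle may enter. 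In either case the intuitive reason the bound should be linear is that two-triangle edges require very ``empty'' digons, and emptiness is a scarce resource in a dense intersecting arrangement. Once $e_2 = O(n)$ is in hand, $3 p_3 \le E + O(n) = 2n^2 + O(n)$, yielding $p_3 \le \tfrac{2}{3}n^2 + O(n)$.
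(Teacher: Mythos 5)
Your reduction is sound: with $E = 2n(n-1)$ and $3p_3 = e_1 + 2e_2 = E + (e_2 - e_0)$, the theorem does follow from $e_2 = O(n)$, and your structural analysis of a two-triangle edge is correct. Indeed the two flanking triangles must have \emph{distinct} third vertices, one at each of the two crossings of $D_1 \cap D_2$ (they cannot coincide, since otherwise both edges of $D_1$ at $v_1$ would end at the same point and $D_1$ would meet $C$ only once), so the bowtie fills a cell of the sub-arrangement $\{D_1,D_2\}$ that is traversed by $C$ and by no other pseudocircle. But the argument stops exactly where the theorem lives: the claim $e_2=O(n)$ is never established, and neither of your two proposed routes closes the gap. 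Route (ii) gives at best a linear bound per pseudocircle $C$, and summing over the $n$ choices of $C$ returns you to $O(n^2)$ --- the same obstruction as your ``na\"ive flanking count''. Route (i) is too undeveloped to evaluate. The missing ingredient is a \emph{global} extremal input: $e_2=O(n)$ is in essence a bound on the number of lenses of two pseudocircles that are crossed by exactly one further pseudocircle, and no purely local or per-circle argument will deliver a bound that is linear in total.

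The natural way to finish along your lines is to observe that deleting the single crossing pseudocircle $C$ turns each bowtie cell into a digon of $\mathcal{A}\setminus\{C\}$, and then to combine the linear bound on digons in intersecting arrangements of pseudocircles~\cite{anppss-lenses-04} with a Clarkson--Shor sampling argument (keep each pseudocircle independently with probability $1/2$; a once-crossed lens survives as a digon of the sample with constant probability), which yields $e_2=O(n)$. The paper's proof takes a genuinely different decomposition: it counts triangle--vertex rather than triangle--edge incidences, shows that no crossing lies on four triangles and that the crossings lying on three triangles induce only singleton, edge, or triangle components, and then converts each such component into new digons by a triangle flip. But it leans on exactly the same external result, the $O(n)$ digon bound of~\cite{anppss-lenses-04}. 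Without invoking that result (or proving a substitute), your argument cannot be completed as written.
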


\begin{proof}
 
  Let $\AA$ be an intersecting arrangement of $n\ge 4$ pseudocircles.  We
  view $\AA$ as a 4-regular plane graph, i.e., the set $X$ of
  crossings is the vertex set and edges are the segments which connect
  consecutive crossings on a pseudocircle.

\begin{claim}
\label{thm:upper-max_claim:A}
No crossing is incident to 4 triangular cells.
\end{claim}
Assume that a crossing $u$ of $C_i$ and $C_j$ is incident to
four triangular cells. Then there is a pseudocircle~$C_k$ which bounds those 4
triangles, see 
Figure~\ref{fig:proof_maxtriangles_14}(\subref{fig:proof_maxtriangles_1}). 
Now $C_k$ only intersects $C_i$ and
$C_j$. This, however, is impossible because $n \ge 4$ and
$\AA$ is intersecting.\qedclaim

%%%%%%%%%%%%%%%%%%%%%%%%%%%%%%%%%%%%%%%%%%%%%%%%%%%%%%%%%%%%%%%%%%%%%%%%%%%%
\begin{figure}[htb]
  \centering
  
  \hbox{}
  \hfill
  \begin{subfigure}[t]{.3\textwidth}
    \centering
    \includegraphics[page=1]{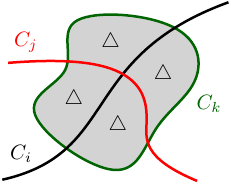}
    \caption{}
    \label{fig:proof_maxtriangles_1}
  \end{subfigure}
  \hfill
  \begin{subfigure}[t]{.4\textwidth}
    \centering
    \includegraphics[page=4]{proof_maxtriangles_revised}
    \caption{}
    \label{fig:proof_maxtriangles_4}
  \end{subfigure}
  \hfill
  \hbox{}
 
  \caption{Illustrations of the proof of 
Claim~\ref{thm:upper-max_claim:A} and Claim~\ref{thm:upper-max_claim:B}.}
  \label{fig:proof_maxtriangles_14}
\end{figure}
%%%%%%%%%%%%%%%%%%%%%%%%%%%%%%%%%%%%%%%%%%%%%%%%%%%%%%%%%%%%%%%%%%%%%%%%%%%%

Let $X' \subseteq X$ be the set of crossings of $\AA$
that are incident to 3 triangular cells. Our aim is to show that
$|X'|$ is small, in fact $|X'| \in O(n)$.
When this is shown we can bound the number of triangles in
$\AA$ as follows:
Under the assumption that $|X'| \in O(n)$,
the number of triangles incident to a crossing in $X'$ 
clearly is in $O(n)$. Now let $Y = X\setminus X'$.
Each of the remaining triangles is incident to three elements of
$Y$ and each crossing of $Y$ is incident to at most 2 triangles.
Hence, there are at most $2|Y|/3 + O(n)$ triangles.
Since $|Y| \leq |X| = n(n-1)$ we obtain the bound claimed in the
statement of the theorem.
 
To show that $|X'|$ is small we need some preparation.
 
\begin{claim}
\label{thm:upper-max_claim:B}
Two adjacent crossings $u,v$ in $X'$ share two triangles.
\end{claim}
Since $u$ and $v$ are both incident to 3 triangles, there is at least
one triangle $\triangle$ incident to both of them.  Assume for a
contradiction that the other cell which is incident to the segment
$uv$ is not a triangle.  Let $C_i,C_j,C_k$ be the three pseudocircles
such that $u$ is a crossing of $C_i$ and $C_j$, $v$ is a
crossing of $C_i$ and $C_k$, and $\triangle$ is bounded by
$C_i,C_j,C_k$; see Figure~\ref{fig:proof_maxtriangles_14}(\subref{fig:proof_maxtriangles_4}).  
We denote
the third vertex of $\triangle$ by~$w$ and note that $w$ is a crossing
of $C_j$ and $C_k$.

Since $u$ is incident to three triangles,
the segment $uw$ bounds another triangle,
which is again defined by $C_i,C_j,C_k$.
Let $u'$ be the third vertex incident to that triangle.
Similarly, the segment $vw$ is incident to another triangle 
which is also defined by $C_i,C_j,C_k$, 
and has a third vertex $v'$.

Again, by the same argument, the segments $uu'$ and $vv'$, respectively, are
both incident to another triangle.  However, this is impossible as the two
circles $C_j$ and $C_k$ intersect three times.  Thus both faces incident to
segment $uv$ are triangles.  \qedclaim

\begin{claim}
\label{thm:upper-max_claim:C}
Let $u,v,w$ be three distinct crossings in $X'$.
If $u$ is adjacent to both $v$ and $w$, then $v$ is adjacent to $w$.
\end{claim}
Since $u$ is incident to three triangles and
the segments $uv$ and $uw$ are both incident to two triangles,
there is a triangle $\triangle$ with corners $u,v,w$.
This triangle shows that $u$, $v$, and $w$ are adjacent to each other.
\qedclaim

Claim~\ref{thm:upper-max_claim:C}
implies that each connected component of the graph induced by $X'$ is
a complete graph. It is easy to see that a $K_4$ induced by $X'$ 
is impossible, and therefore, all components induced by $X'$ are either singletons,
edges, or triangles. Figure~\ref{fig:proof_maxtriangles_567} shows
the local structure of the arrangement around components of these
three types. 

%%%%%%%%%%%%%%%%%%%%%%%%%%%%%%%%%%%%%%%%%%%%%%%%%%%%%%%%%%%%%%%%%%%%%%%%%%%%
\begin{figure}[htb]
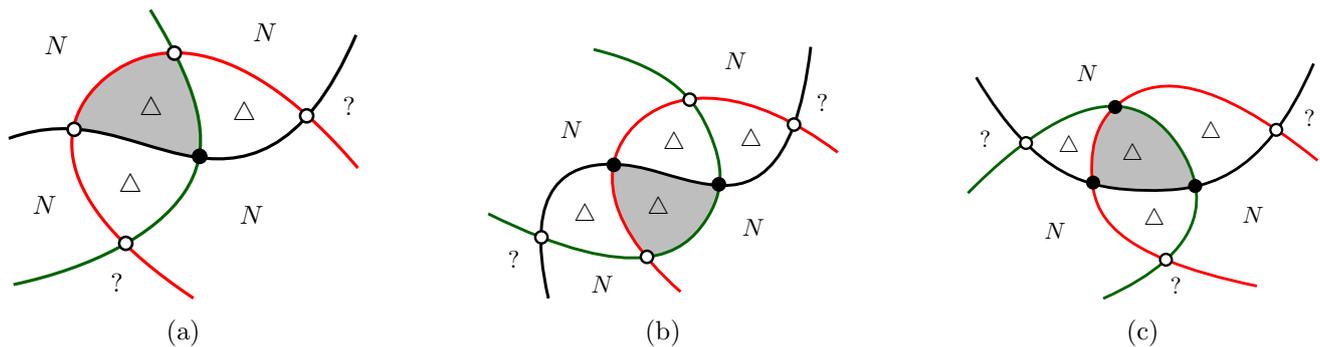

  \centering
  \begin{subfigure}[t]{.27\textwidth}
    \centering
    \includegraphics[width=\textwidth,page=5]{proof_maxtriangles_revised}
    \caption{}
    \label{fig:proof_maxtriangles_5}
  \end{subfigure}
  \hfill
  \begin{subfigure}[t]{.27\textwidth}
    \centering
    \includegraphics[width=\textwidth,page=6]{proof_maxtriangles_revised}
    \caption{}
    \label{fig:proof_maxtriangles_6}
  \end{subfigure}
  \hfill
  \begin{subfigure}[t]{.27\textwidth}
    \centering
    \includegraphics[width=\textwidth,page=7]{proof_maxtriangles_revised}
    \caption{}
    \label{fig:proof_maxtriangles_7}
  \end{subfigure}
 
  \caption{An illustration of the configurations of crossings in $X'$. 
  In this figure $\triangle$ marks a triangle,
  ``$N$'' marks a $k$-cell with $k \ge 4$ 
  (``neither a triangle, nor a digon''), 
  ``?'' marks an arbitrary cell.
  Crossings with 3 incident triangles are shown as black vertices
  (these are the crossings in $X'$).
}
  \label{fig:proof_maxtriangles_567}
\end{figure}
%%%%%%%%%%%%%%%%%%%%%%%%%%%%%%%%%%%%%%%%%%%%%%%%%%%%%%%%%%%%%%%%%%%%%%%%%%%%

%%%%%%%%%%%%%%%%%%%%%%%%%%%%%%%%%%%%%%%%%%%%%%%%%%%%%%%%%%%%%%%%%%%%%%%%%%%%
\begin{figure}[htb]
  \centering
    \begin{subfigure}[t]{.27\textwidth}
    \centering
    \includegraphics[width=\textwidth,page=9]{proof_maxtriangles_revised}
    \caption{}
    \label{fig:proof_maxtriangles_9}
  \end{subfigure}
  \hfill
  \begin{subfigure}[t]{.27\textwidth}
    \centering
    \includegraphics[width=\textwidth,page=10]{proof_maxtriangles_revised}
    \caption{}
    \label{fig:proof_maxtriangles_10}
  \end{subfigure}
  \hfill
  \begin{subfigure}[t]{.27\textwidth}
    \centering
    \includegraphics[width=\textwidth,page=11]{proof_maxtriangles_revised}
    \caption{}
    \label{fig:proof_maxtriangles_11}
  \end{subfigure}
\caption{The configurations in (\subref{fig:proof_maxtriangles_9}), (\subref{fig:proof_maxtriangles_10}), and (\subref{fig:proof_maxtriangles_11}) 
are obtained by flipping the gray triangle in the configuration from
Figure~\ref{fig:proof_maxtriangles_567}(\subref{fig:proof_maxtriangles_5}),
\ref{fig:proof_maxtriangles_567}(\subref{fig:proof_maxtriangles_6}), and
\ref{fig:proof_maxtriangles_567}(\subref{fig:proof_maxtriangles_7}), respectively. The digons created
by the flip are marked ``D''.}
  \label{fig:proof_maxtriangles_911}
\end{figure}
%%%%%%%%%%%%%%%%%%%%%%%%%%%%%%%%%%%%%%%%%%%%%%%%%%%%%%%%%%%%%%%%%%%%%%%%%%%%

To show that $|X'|$ is small, we are going to trade crossings of $X'$
with digons and then refer to a result of Agarwal et
al.~\cite{anppss-lenses-04}. They have shown that the number of digons
in intersecting arrangements of pseudocircles is at most linear in $n$.

To convert crossings of $X'$ into digons we use \emph{triangle flips}.
Each of the configurations shown in
Figure~\ref{fig:proof_maxtriangles_567}
has a gray triangle. By flipping these triangles we obtain the 
configurations shown in Figure~\ref{fig:proof_maxtriangles_911}.
These so-obtained configurations have at least as many new digons as the
original configurations contain crossings in~$X'$. 
It may be that the flip creates new triangles and even
new vertices which are incident to 3 triangles. However,
the flips never remove digons. 

Therefore, thanks to the
result from~\cite{anppss-lenses-04} we can make 
no more than $O(n)$ flips before all the crossings are incident to
at most 2 triangles.
This finishes the proof of Theorem~\ref{thm:upper-max}.
\end{proof}

In the proof of Theorem~\ref{thm:upper-max}, we have used flips to trade
segments incident to two triangles against digons. It can be shown that at
most one component of the graph induced by $X'$ is a $K_3$. The
proof of this fact is omitted here since it does not improve the 
bound given in the theorem. Having used a bound on the number of digons 
we recall that Gr\"unbaum conjectures that $p_2 \le 2n-2$ holds
for intersecting arrangements.

Since intersecting arrangements have $2\binom{n}{2}+2$ faces, we
can also rewrite the statement of Theorem~\ref{thm:upper-max}: at most
$\frac{2}{3}+O(\frac{1}{n})$ of all cells of an intersecting arrangement are
triangles.  For $n=7$ there exist arrangements with
$29=\frac{4}{3}\binom{7}{2}+1$ triangles. It would be interesting to
know what the precise maximum value of $p_3$ for $n$ large.
   
For non-intersecting arrangements the arguments from the proof of
Theorem~\ref{thm:upper-max} do not work.
Figure~\ref{fig:interesting_connected_arrangements}(\subref{fig:connected_arrangement_all_triangles})
shows an arrangement where all but two cells are triangles.  However, if each
pseudocircle is required to intersect at least 3 other pseudocircles, then we
can proceed similar and show that the triangle-cell-ratio is at
most~$5/6+O(1/n)$.  In fact,
Figure~\ref{fig:interesting_connected_arrangements}(\subref{fig:connected_arrangement_56_triangles})
shows a construction with triangle-cell-ratio $5/6+O(1/\sqrt{n})$.
  
\begin{theorem}\label{thm:56_upper_bound}
Let $\AA$ be an arrangement of $n$ pseudocircles
where every pseudocircle intersects at least three other pseudocircles.
Then the triangle-cell-ratio is at most~$5/6+O(1/n)$.
\end{theorem}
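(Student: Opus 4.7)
The plan is to mirror the proof of Theorem~\ref{thm:upper-max} step by step, verifying that each step still goes through under the weaker hypothesis that every pseudocircle meets at least three others. Viewing $\mathcal A$ as a $4$-regular plane graph and letting $m$ be the number of intersecting pairs, we have $V=2m$ crossings, $E=4m$ edges, and $F=2m+2$ faces. The hypothesis $d_C\ge 3$ for every pseudocircle $C$ yields $2m=\sum_C d_C\ge 3n$, hence $F\ge 3n+2$.

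First, I would re-examine Claims~\ref{thm:upper-max_claim:A}, \ref{thm:upper-max_claim:B}, and \ref{thm:upper-max_claim:C}. Claim~A is adapted: a crossing of $C_i$ and $C_j$ incident to four triangles would force the bounding pseudocircle $C_k$ to meet only $C_i$ and $C_j$, contradicting $d_{C_k}\ge 3$. Claims~B and~C rely solely on two pseudocircles intersecting in at most two points, which is unaffected by the change of hypothesis. In particular, $X'$, the set of crossings incident to three triangles, still induces a subgraph whose components are singletons, edges, or triangles. Applying the triangle-flip argument (Figures~\ref{fig:proof_maxtriangles_567}--\ref{fig:proof_maxtriangles_911}) and invoking the linear digon bound of Agarwal et al.~\cite{anppss-lenses-04} yields $|X'|=O(n)$, and Claim~A gives $3p_3\le 2V+|X'|$, hence $p_3\le\tfrac{2}{3}V+O(n)$.

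To derive the claimed ratio bound, divide by $F$ and use $V=F-2$ together with $n\le F/3$ to obtain $p_3/F\le \tfrac{2}{3}+|X'|/(3F)+O(1/F)$. The main obstacle lies in the sparse regime $m=\Theta(n)$ (equivalently, $F=\Theta(n)$): there, the bound $|X'|=O(n)$ is itself not strong enough, and one needs the sharper estimate $|X'|\le \tfrac{3}{2}n+O(1)$, which makes $|X'|/(3F)\le \tfrac{1}{6}+O(1/n)$ and yields $p_3/F\le \tfrac{2}{3}+\tfrac{1}{6}+O(1/n)=\tfrac{5}{6}+O(1/n)$. I expect this sharper estimate to follow from a refined analysis of the flips --- counting precisely how many new digons each flipped singleton, edge, or triangle component of $X'$ produces, and exploiting the observation (from the remark after the proof of Theorem~\ref{thm:upper-max}) that at most one $K_3$-component of $X'$ can occur. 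In the dense regime $m=\omega(n)$, by contrast, one has $|X'|/F=O(n/F)=o(1)$ automatically, so no further argument is needed.
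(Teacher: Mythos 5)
Your reduction to bounding $|X'|$ is sound as far as it goes: Claims~\ref{thm:upper-max_claim:A}--\ref{thm:upper-max_claim:C} do survive the weakened hypothesis (your adaptation of Claim~\ref{thm:upper-max_claim:A} is exactly right), and the counting inequality $3p_3\le 3|X'|+2|X\setminus X'|=2|X|+|X'|$ is the same one the paper uses. The gap is in how you bound $|X'|$. The triangle-flip argument trades crossings of $X'$ for digons and then invokes the linear digon bound of Agarwal et al.~\cite{anppss-lenses-04}, but that bound is for \emph{pairwise intersecting} arrangements; under the present hypothesis the arrangement need not be pairwise intersecting, and the paper's own proof explicitly flags that the intersecting property ``was only used to bound the number of digons'' --- i.e., precisely the step you want to reuse is the one that is no longer available. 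Your fallback, the ``sharper estimate $|X'|\le\tfrac32 n+O(1)$,'' is only announced as expected, not proved; since you propose to extract it from a refined count of the digons produced by flips, it inherits the same problem, and even in the pairwise intersecting case the flip argument yields only $|X'|=O(n)$ with an unspecified constant, not $\tfrac32 n$. So the key quantitative step is missing.

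The paper closes the argument by a different, purely local step that avoids digons altogether: Claim~\ref{thm:upper-max_claim:D} shows that every vertex of $X\setminus X'$ has at most two neighbors in $X'$ (otherwise two pseudocircles through that vertex would be forced to cross three times), while Claims~\ref{thm:upper-max_claim:A}--\ref{thm:upper-max_claim:C} show that every vertex of $X'$ has at least two neighbors in $X\setminus X'$, because the components of the graph induced by $X'$ are singletons, edges, or triangles. Double counting the edges between $X'$ and $X\setminus X'$ (a discharge of $1/2$ from each vertex of $X'$) gives $|X'|\le|X\setminus X'|$, i.e.\ $|X'|\le|X|/2$, and substituting into $3p_3\le 2|X|+|X'|$ yields $p_3\le\tfrac56|X|$ with $F=|X|+2$ faces --- no digon bound, no case split between sparse and dense regimes. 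To repair your proof you would need to supply Claim~\ref{thm:upper-max_claim:D} or an independent argument that $|X'|\le|X|/2$.
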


\begin{proof}
We proceed as in the proof of Theorem~\ref{thm:upper-max}.
In fact, as the ``intersecting'' property was only used to bound the number of digons,
Claims~\ref{thm:upper-max_claim:A}--\ref{thm:upper-max_claim:C} hold also in this less restrictive setting.

From Claims~\ref{thm:upper-max_claim:A}--\ref{thm:upper-max_claim:C} 
we have learned that every vertex from $X'$ has at least two neighbors from $X \setminus X'$.
The following claim will help us to show $|X'| \le |X \setminus X'|$.

\addtocounter{claim}{3}
\begin{claim}
\label{thm:upper-max_claim:D}
Every vertex from $X \setminus X'$ has at most two neighbors from~$X'$.
\end{claim}

Suppose for a contradiction that a vertex $v \in X \setminus X'$ has (at
least) three neighbors $x,y,z$ from~$X'$.  Since $x,y,z$ each have three
incident triangular faces and since $v \not \in X'$,  two of the
neighboring faces of~$v$ are triangles.  In particular, those two triangular
faces are not adjacent as otherwise $x,y,z$ would lie in the same component of
$G[X']$ and have the same neighbor $v$ -- which is impossible.

Without loss of generality, we assume that $xy$ is an edge and that $z$ forms
an edge with the fourth neighbor of $v$, which we denote by~$w$.  Since $x$
is incident to a non-triangular face (which is also incident to $v$), the edge
$xy$ bounds another triangle.  The same argument shows that $zw$ bounds
another triangle, and therefore, the two pseudocircles passing through $v$
intersect three times -- a contradiction; see
Figure~\ref{fig:proof_maxtriangles_claim_IV}.  This finishes the proof of
Claim~\ref{thm:upper-max_claim:D}.  \qedclaim

%%%%%%%%%%%%%%%%%%%%%%%%%%%%%%%%%%%%%%%%%%%%%%%%%%%%%%%%%%%%%%%%%%%%%%%%%%%%
\begin{figure}[htb]
  \centering
  %\begin{subfigure}[t]{.25\textwidth}
   % \centering
    \includegraphics[page=12]{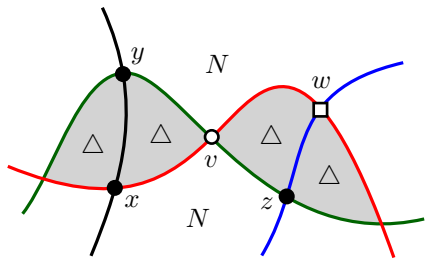}
    %\caption{}
    %\label{fig:proof_maxtriangles_12}
  %\end{subfigure}
\caption{An illustration of the proof of Claim~\ref{thm:upper-max_claim:D}.}
  \label{fig:proof_maxtriangles_claim_IV}
\end{figure}
%%%%%%%%%%%%%%%%%%%%%%%%%%%%%%%%%%%%%%%%%%%%%%%%%%%%%%%%%%%%%%%%%%%%%%%%%%%%

We can now discharge $1/2$ from every vertex of $X'$ to its neighbors from $X
\setminus X'$ and, by Claim~\ref{thm:upper-max_claim:D}, count at most~1 at
each of the vertices from $X \setminus X'$.  Therefore, $|X'| \le |X \setminus
X'|$ holds. By counting the face-vertex-incidences we get
$$
p_3 \le \frac{3|X'|+2|X \setminus X'|}{3}\le  \frac{5|X|}{6} ,
$$
and since the number of faces equals $|X|+2$, 
this completes the proof of Theorem~\ref{thm:56_upper_bound}.
\end{proof}

\subsection{Constructions using Arrangements of Pseudolines}
\label{sec:great-pseudocircles}

Great-circles on the sphere are a well known model for projective arrangements of lines.
Antipodal pairs of points on the sphere
correspond to points of the projective plane. Hence, the great-circle
arrangement corresponding to a projective
arrangement $\AA$ of lines has twice as many vertices, edges, and faces
of every type as $\AA$. The same idea can be applied to 
projective arrangements of pseudolines. If $\AA$ is a projective
arrangement of pseudolines, take a drawing of $\AA$ in the unit disk $D$
such that every line $\ell$ of~$\AA$ connects two antipodal points of~$D$. 
Project $D$ to the upper hemisphere of a sphere $S$, such that the
boundary of $D$ becomes the equator of $S$. Use a projection through
the center of $S$ to copy the drawing from the  upper hemisphere
to the lower hemisphere of $S$. By construction the two copies of each
pseudoline from $\AA$ join together to form a pseudocircle.
The collection of these pseudocircles yields an intersecting arrangement of
pseudocircles on the sphere with twice as many vertices, edges, and faces
of every type as $\AA$. Arrangements of
pseudocircles obtained by this construction have a special property: 
\begin{itemize}
\item 
If three pseudocircles $C$, $C'$, and $C''$ have no common crossing,
then $C''$ separates the two crossings of 
$C$ and $C'$.
\end{itemize}

\noindent
Gr\"unbaum~\cite{Gr72} calls arrangements with this property `symmetric'.  In
the context of oriented matroids the property is part of the definition of
arrangements of pseudocircles~\cite{blwsz-om-93}.
In~\cite{FelsnerScheucher2019} we call arrangements with this
property ``arrangement of great-pseudocircles'' as they generalize the
properties of arrangements of great-circles.

Arrangements of pseudolines which maximize the number of triangles have been
studied intensively. Blanc~\cite{Blanc2011} gives tight upper bounds for the
maximum both in the Euclidean and in the projective case and constructs
arrangements of pseudolines with $\frac{2}{3}\binom{n}{2} - O(n)$ triangles
for every~$n$.  In particular, for $n \equiv 0,4 \pmod 6$ projective
arrangements of straight lines with $\frac{2}{3}\binom{n}{2}$ triangles are
known; see also~\cite{fg-pa-16}.  This directly translates to the existence of
(1)~intersecting arrangements of pseudocircles with $\frac{4}{3}\binom{n}{2} -
O(n)$ triangles for every~$n$ and (2)~intersecting arrangements of circles
with $\frac{4}{3}\binom{n}{2}$ triangles for $n \equiv 0,4 \pmod 6$.  The
`doubling method' that has been used for constructions of arrangements of
pseudolines with many triangles, see~\cite{Blanc2011}, can also be applied for
pseudocircles. In fact, in the case of pseudocircles there is more flexibility
for applying the method. Therefore, it is conceivable that
$\lfloor\frac{4}{3}\binom{n}{2}\rfloor$ triangles can be achieved for all~$n$.

%%%%%%%%%%%%%%%%%%%%%%%%%%%%%%%%%%%%%%%%%%%%%%%%%%%%%%%%%%%%%%%%%%%%%%%%%%%%
\section{Visualization}
\label{sec:tutte-draw}

Most of the figures in this paper have been generated automatically.  The
programs are written in the mathematical software
SageMath~\cite{sagemath_website}, they are available on demand.  We encode an
intersecting arrangement of pseudocircles by its dual graph.  Each face in the
arrangement is represented by a vertex and two vertices share an edge if and
only if the two corresponding faces share a common pseudosegment.  Note that,
in the dual graph of every intersecting arrangement, the only 2-separators are
the two neighbored vertices of a vertex corresponding to a digon.  By
replacing such digon-vertices by edges, we obtain a 3-connected graph which has
the ``same'' embeddings as the original graph.  Since 3-connected planar
graphs have a unique embedding (up to isomorphism), the same is true for the
original dual graph. 

To visualize an intersecting arrangement of pseudocircles,
we draw the primal (multi)graph using straight-line segments, 
in which vertices represent crossings of pseudocircles 
and edges connect two vertices if they are connected by a pseudocircle segment. 
Note that in the presence of digons we obtain double-edges.

In our drawings, 
pseudocircles  are colored by distinct colors,
and triangles (except the outer face) are filled gray.
In straight-line drawings, edges corresponding to digons 
are drawn dashed in the two respective colors alternatingly,
while in the curved drawings digons are represented by a
point where the two respective pseudocircles touch.

\subsection{Iterated Tutte Embeddings}

To generate nice aesthetic drawings automatically, we iteratively use
weighted Tutte embeddings.  We fix a non-digon cell as the outer cell
and arrange the vertices of the outer cell as the corners of a regular
polygon.  Starting with edge-weights all equal to 1, we obtain an
ordinary plane Tutte embedding.

For iteration $j$,
we set the weights (force of attraction) of an edge $e=\{u,v\}$ 
proportional to $p(A(f_1)) + p(A(f_2)) + q(\|u-v\|/j)$
where $f_1,f_2$ are the faces incident to $e$, $A(.)$ is the area
function, $\| \cdot \|$ is the Euclidean norm, and $p,q$ are suitable monotonically 
increasing functions from $\mathbb{R}^+$ to $\mathbb{R}^+$ (we use $p(x) = x^4$ and $q(x)=x^2/10$).

Intuitively, if the area of a face becomes too large,
the weights of its incident edges are increased and will rather be
shorter so that the area of the face will also get smaller in the next
iteration. It turned out that in some cases the areas of the faces 
became well balanced but some edges were very short and others
long. Therefore we added the dependence on the edge length which is
strong at the beginning and decreases with the iterations.
The particular choice of the functions was the result of
interactive tuning. The iteration is terminated when the change of 
the weights becomes small enough.

\subsection{Visualization using Curves}

On the basis of the straight-line embedding obtained with the Tutte
iteration we use splines to smoothen the curves. The details are 
as follows. First we take a 2-subdivision of the graph,
where all subdivision-vertices adjacent to a given vertex $v$
are placed at the same distance $d(v)$ from $v$. 
We choose $d(v)$ so that it is at most 1/3 of the length of an edge
incident to $v$. We then use B-splines to visualize the curves.
Even though one can draw B\'ezier curves directly with Sage,
we mostly generated ipe files (xml-format, cf.\ \cite{IPE}) so that we 
can further process the arrangements.
Figures~\ref{fig:n9_nonr_pappus}(\subref{fig:n9_nonr_pappus_pca_tutte_straightline}) 
and \ref{fig:n9_nonr_pappus}(\subref{fig:n9_nonr_pappus_pca_tutte})
show the straight-line and curved drawing of an arrangement of pseudocircles, respectively.

%%%%%%%%%%%%%%%%%%%%%%%%%%%%%%%%%%%%%%%%%%%%%%%%%%%%%%%%%%%%%%%%%%%%%%%%%%%%
\begin{figure}[htb]
  \centering
  \begin{subfigure}[t]{.32\textwidth}
    \centering
    \includegraphics[width=\textwidth]{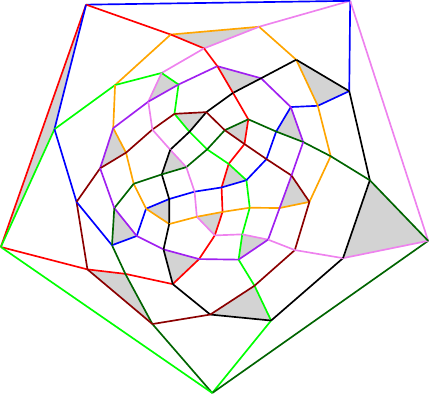}
    \caption{}
    \label{fig:n9_nonr_pappus_pca_tutte_straightline}
  \end{subfigure}
  \hfill
  \begin{subfigure}[t]{.32\textwidth}
    \centering
    \includegraphics[width=\textwidth]{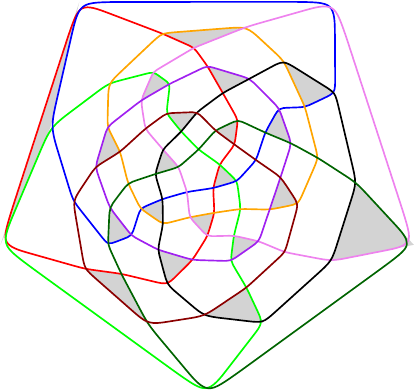}
    \caption{}
    \label{fig:n9_nonr_pappus_pca_tutte}
  \end{subfigure}
  \hfill
  \begin{subfigure}[t]{.3\textwidth}
    \centering
    \includegraphics[width=\textwidth]{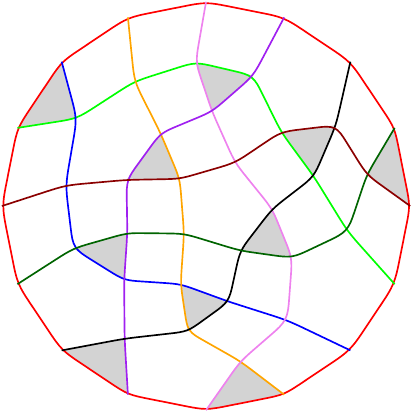}
    \caption{}
    \label{fig:n9_nonr_pappus_pla_tutte}
  \end{subfigure}
  \caption{(\subref{fig:n9_nonr_pappus_pca_tutte_straightline})~Straight-line and (\subref{fig:n9_nonr_pappus_pca_tutte})~curved drawings 
  of the arrangement of great-pseudocircles, which consists of two copies of
  (\subref{fig:n9_nonr_pappus_pla_tutte})~the non-Pappus arrangement of pseudolines.}
  \label{fig:n9_nonr_pappus}
\end{figure}
%%%%%%%%%%%%%%%%%%%%%%%%%%%%%%%%%%%%%%%%%%%%%%%%%%%%%%%%%%%%%%%%%%%%%%%%%%%%

\subsection{Visualization of Arrangements of Pseudolines}

We also adapted the code to visualize arrangements of
pseudolines nicely.  One of the lines is considered as the ``line at infinity''
which is then drawn as a regular polygon.  
Figure~\ref{fig:n9_nonr_pappus}(\subref{fig:n9_nonr_pappus_pla_tutte}) gives an illustration.

\subsection{A more general Representation}

As suggested in~\cite{FelsnerScheucher2019},
intersecting arrangements with digons and
non-intersecting arrangements of pseudocircles 
may be visualized by their primal-dual graph;
see for example Figure~\ref{fig:two_repesentations_of_AAsixA}.
Even though the primal-dual graph is a simple graph and has a unique embedding,
we decided to stick to the above described visualizations 
because primal-dual graphs have about 4 times as many primitives as dual graphs
and therefore are somewhat harder to read (for humans).
In particular, $k$-cells in the arrangement are visualized as polygons of size $2k$
and therefore that representation is not that suitable for an article on cells.
As an example consider the rightmost triangle bounded 
by the green, the orange, and the black pseudocircle in
Figure~\ref{fig:two_repesentations_of_AAsixA}(\subref{fig:three_repesentations_of_AAsixA_2})
which actually looks like a quadrangle.

%%%%%%%%%%%%%%%%%%%%%%%%%%%%%%%%%%%%%%%%%%%%%%%%%%%%%%%%%%%%%%%%%%%%%%%%%%%%
\begin{figure}[htb]
  \centering
  
  \hbox{}
  \hfill
  \begin{subfigure}[t]{.3\textwidth}
    \centering
    \includegraphics[width=\textwidth]{curved_n6_8.pdf}
    \caption{}
    \label{fig:three_repesentations_of_AAsixA_1}
  \end{subfigure}
  \hfill
  \begin{subfigure}[t]{.32\textwidth}
    \centering
    \includegraphics[page=3,width=\textwidth]{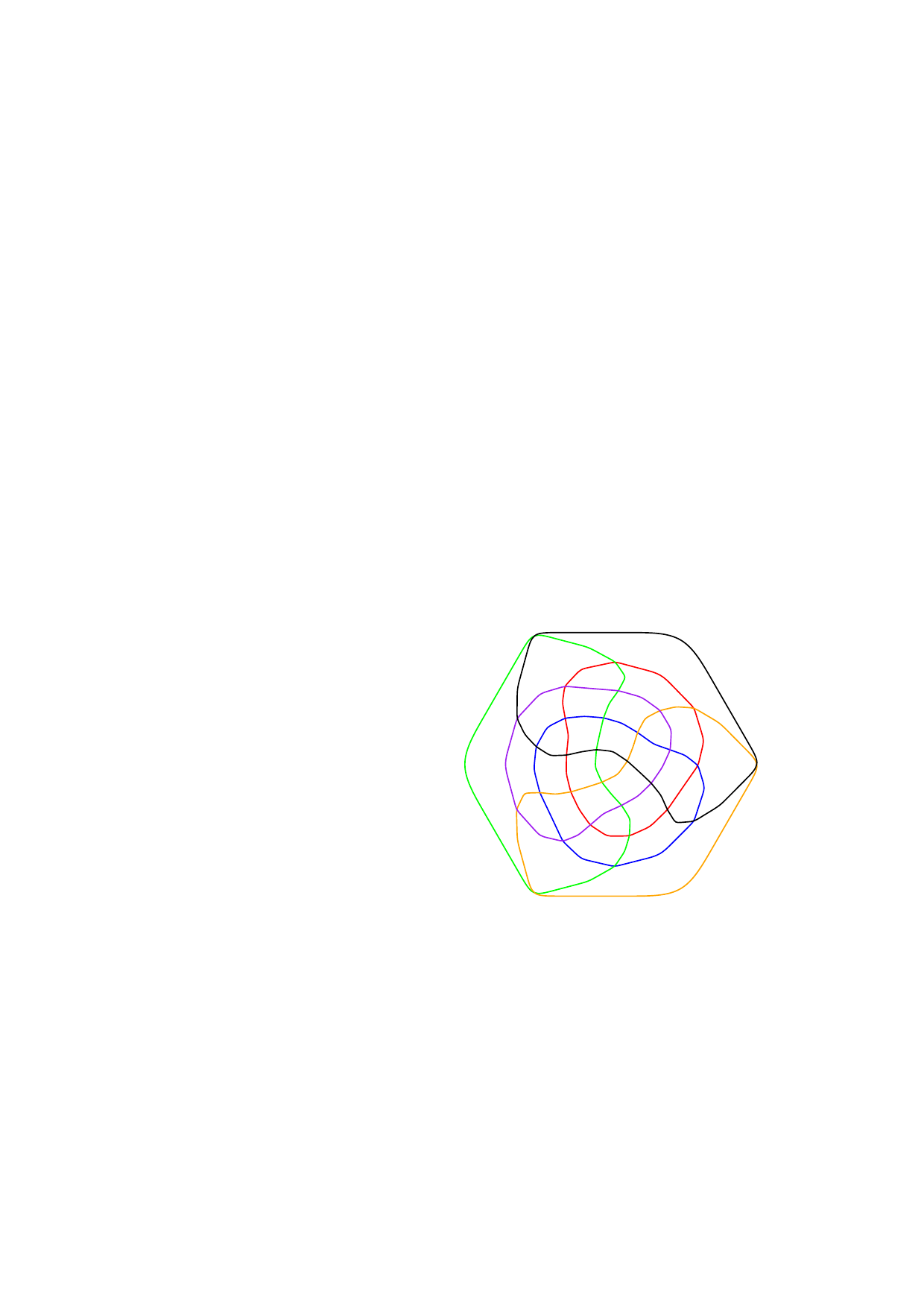}
    \caption{}
    \label{fig:three_repesentations_of_AAsixA_2}
  \end{subfigure}
  \hfill
  \hbox{}

  \caption{Two drawings of $\AAsixA$: 
  (\subref{fig:three_repesentations_of_AAsixA_1})~curved primal graph. 
  (\subref{fig:three_repesentations_of_AAsixA_2})~curved primal-dual graph.}
  \label{fig:two_repesentations_of_AAsixA}
\end{figure}
%%%%%%%%%%%%%%%%%%%%%%%%%%%%%%%%%%%%%%%%%%%%%%%%%%%%%%%%%%%%%%%%%%%%%%%%%%%%

%%%%%%%%%%%%%%%%%%%%%%%%%%%%%%%%%%%%%%%%%%%%%%%%%%%%%%%%%%%%%%%%%%%
%%\small
\advance\bibitemsep-3pt
\let\sc=\scshape
\bibliographystyle{my-siam}
\bibliography{bibliography}

\end{document}